\theoremstyle{definition}
\newtheorem{thm}{Theorem}[section]
\newtheorem{cor}[thm]{Corollary}
\newtheorem{lem}[thm]{Lemma}
\newtheorem{prop}[thm]{Proposition}
\newtheorem{rmk}[thm]{Remark}
\newtheorem*{prob}{Problem}
\newtheorem*{thm*}{Theorem}
\numberwithin{equation}{section}
\newcommand{\subjclass}[1]{\bigskip\noindent\emph{2010 Mathematics Subject Classification:}\enspace#1}
\newcommand{\keywords}[1]{\noindent\emph{Keywords:}\enspace#1}
\definecolor{alert}{rgb}{0.8,0,0.3}
\newcommand{\alert}[1]{%
	\marginpar{%
		\ifodd\value{page} \raggedright \else \raggedleft \fi
		\footnotesize{\textcolor{alert}{#1}}
	}
}
\newcommand{\R}{\mathbb{R}}
\newcommand{\N}{\mathbb{N}}
\newcommand{\BH}{{\mathbf H}}
\newcommand{\SH}{\mathcal{H}}
\renewcommand{\S}{\mathbb{S}}
\newcommand{\IP}[2]{\left< #1 , #2 \right>}
\newcommand{\vn}[1]{\lVert#1\rVert}
\newcommand{\A}{\text{Area }\Sigma}
\newcommand{\V}{\text{Vol }\Sigma}
\newcommand*{\email}[1]{\href{mailto:#1}{\nolinkurl{#1}} } 
\begin{document}


\baselineskip=17pt


\title{Rigidity and stability of spheres in the Helfrich model}

\author{Yann Bernard\\
              Department Mathematik, ETH Z\"urich, Switzerland \\
              \email{yann.bernard@math.ethz.ch} \\
        Glen Wheeler (\Letter)\\
              School of Mathematics and Applied Statistics, University of Wollongong, Australia \\
              \email{glenw@uow.edu.au}\\
        Valentina-Mira Wheeler\\
              School of Mathematics and Applied Statistics, University of Wollongong, Australia \\
              \email{vwheeler@uow.edu.au}
      }
\date{}


\maketitle

\begin{abstract}
The Helfrich functional, denoted by $\SH^{c_0}$, is a mathematical expression
proposed by \citet{H73} for the natural free energy carried by an elastic
phospholipid bilayer.
Helfrich theorises that idealised elastic phospholipid bilayers minimise
$\SH^{c_0}$ among all possible configurations.  The functional integrates a
spontaneous curvature parameter $c_0$ together with the mean curvature of the
bilayer and constraints on area and volume, either through an inclusion of
osmotic pressure difference and tensile stress or otherwise.
Using the mathematical concept of embedded orientable surface to represent the
configuration of the bilayer, one might expect to be able to adapt methods from
differential geometry and the calculus of variations to perform a fine analysis
of bilayer configurations in terms of the parameters that it depends upon.
In this article we focus upon the case of spherical red blood cells with a view
to better understanding spherocytes and spherocytosis.
We provide a complete classification of spherical solutions in terms of the
parameters in the Helfrich model.
We additionally present some further analysis on the rigidity and stability of
spherocytes.

\keywords{Spherocytosis \and Biomembranes \and Helfrich model \and Differential geometry}
\subjclass{74K15 \and 51P05 \and 00A71}
\end{abstract}

\section{Introduction}
\label{Introduction}

Motivated by Hooke's law, \citet{H73} proposed
\[
f_c = \frac{k_c(H-c_0)^2}{2} + \overline{k}K
\]
as the energy per unit area of a lipid bilayer or membrane.
The constants $k_c$ and $\overline{k}$ are the bending moduli.
It is argued in \citet{MH90,DKS90} that $k_c$ is small and positive.
We shall see shortly that the exact value of $\overline{k}$ is not important for our investigations
here (in \citet{S97} it is even set to zero).
The spontaneous curvature $c_0$ on the other hand is a critical component of the model.
It was found, based on experimental data of \citet{EF72}, to approximately satisfy $c_0 = -0.74\,\mu
m^{-1}$ under the assumptions that the membrane is a typical human erythrocyte and normal
physiological conditions are in place (see \citet{DH76,DH76p2}).
For more details on values for these and other parameters we refer to the Remark after
Theorem 1.

Supposing the membrane is represented by a smooth isometric embedding $f:\Sigma\rightarrow\R^3$ of a
two-dimensional closed differentiable manifold $\Sigma$, this gives rise to the energy functional
\begin{align*}
\hat{\SH}^{c_0}(f)
 &= \frac{k_c}{2} \int_\Sigma (H-c_0)^2d\mu + 2\overline{k}\pi\chi(\Sigma)
\\
 &=  \frac{k_c}{2} \int_\Sigma H^2d\mu 
   - k_cc_0 \int_\Sigma H\,d\mu
   + \frac{k_cc_0^2}{2} \text{Area }f
   + 2\overline{k}\pi\chi(\Sigma)\,,
\end{align*}
where $\chi(\Sigma)$ is the Euler characteristic of $\Sigma$ and we have used the Gauss-Bonnet
theorem.
The Euler characteristic $\chi(\Sigma)$ is a topological invariant, satisfying for example
$\chi(\Sigma) = 2 - 2g$, where $g$ is the genus of $\Sigma$.
The genus counts the number of holes in the surface.
In cases where the bilayer is topologically spherical, we have genus$(\Sigma) = 0$.
This includes for example stomatocytes, discocytes, spherocytes, echinocytes, and so on (see \cite{bloodcells}).
As such cells form our primary interest in this paper, we shall work from now on in the topological
class of $g=0$ and $\chi(\Sigma) = 2$.
The notation Area $f$ denotes the area of $(\Sigma, f^*\IP{\cdot}{\cdot})$ as a Riemannian manifold,
where $f^*\IP{\cdot}{\cdot}$ is the pullback via the embedding $f$ of the standard metric
$\IP{\cdot}{\cdot}$ on $\R^3$, the dot product.
That is,
\[
\text{Area }f = \int_\Sigma d\mu = \int_\Sigma \sqrt{\text{det
}(\IP{\partial_if}{\partial_jf}})\,dx\,.
\]
Our motivation for the study of the Helfrich model is in connection with
spherocytosis, a disorder of the membrane of human red blood cells that causes
them to be spherical (spherocytes) as opposed to the standard biconcave disk
shape (discocytes).
Spherocytes break down faster than discocytes, and as they have a lower surface
area than discocytes, (in fact by the isoperimetric problem, spherocytes are in
this sense the worst configuration possible) patients with spherocytosis suffer
from severe anemia \citep{chasis1988decreased,medPGM08,svetina1989membrane}.
Additionally, the spleen sometimes mistakes otherwise healthy spherocytes for
damaged non-functional cells, and destroys them.
This leads to haemolytic anemia, and can be fatal \citep{medPGM08}.
Up to now, the only known treatment is a (often partial) splenectomy
\citep{medAZCRBCCNC09,medHP96,medPGM08,medRWEG07},
which comes with a lifetime of medication, and other complications.
Spherocytosis is the most common form of inheritable anemia in people of
northern European ancestry \citep{medPGM08}.

The cytoskeleton of a human red blood cell is inhomogeneous and sheet-like,
with a lipid bilayer and supporting network of proteins.
Although there remain many open questions regarding the dynamical forces at
play in the cytoskeleton of a human red blood cell (see \cite{steck}), it
is in a sense self-organising so as to minimise certain costs (see \cite{constantc01}).
The basic idea of the Helfrich model is that this cost can be measured in an
idealised setting by the Helfrich functional.
Our goal is to better understand the appearance of spherocytes \emph{in the
model}.
In particular, we study spherical solutions of the Euler-Lagrange equation for
critical points of $\SH^{c_0}$, giving first a complete classifcation of
parameter ranges that allow spherocytes (Theorem \ref{TMspheresintro}) and
second some first steps into rigidity and stability analysis of spherical
solutions (Theorema 2, 4, 8, 9 and Corollaries 5 and 6).
Since the parameters $c_0$, $\lambda$, and $p$ are in principle measurable, we
may in the long-term be able to influence them, and in so doing discourage the
formation of spherocytes.

We now seek to study embeddings $f:\Sigma\rightarrow\R^3$ that \emph{minimise}
the Helfrich functional.
These represent, in a model sense, the biomembranes that we wish to investigate.

In the language of the calculus of variations, the problem is then as follows.
\begin{prob}[P1]
\label{CVPlagmultmethod}
Suppose $\Sigma$ is a closed differentiable 2-manifold with genus zero.
Let $c_0$, $S_0$ and $V_0$ be fixed positive constants.
Minimise $\hat{\SH}^{c_0}(f)$ in the class of smooth embeddings $f:\Sigma\rightarrow\R^3$ subject to
the constraints
\begin{equation}
\label{P1constraint}
\tag{P1.1}
\text{Area }f = S_0\quad\text{and}\quad \text{Vol }f = V_0\,.
\end{equation}
That is, find an embedding $f_0:\Sigma\rightarrow\R^3$ such that Area $f = S_0$, Vol $f = V_0$, and
\begin{equation}
\label{EQmin1}
\hat{\SH^{c_0}}(f_0) \le 
\hat{\SH^{c_0}}(f)
\end{equation}
for any other smooth embedding $f$ of $\Sigma$.
\end{prob}

\begin{rmk}
A candidate embedding $f_0$ which achieves the global energy minimum is called a \emph{solution}. It
is not unique. The constraints and the functional $\hat{\SH}$ are invariant under reparametrisation
as well as rigid motions in $\R^3$.
\end{rmk}

The variational problem (P1) is the classical formulation suggested in \citet{H73,DH76,DH76p2}.
A solution $\hat{f}:\Sigma\rightarrow\R^3$ will satisfy the aforementioned
Euler-Lagrange equation
\begin{equation}
\label{EQeulerlagrange2}
k_c(\Delta H + H|A^o|^2) + 2k_cc_0K - \Big(\frac{k_cc_0^2}{2}+s_0\Big)H - v_0 = 0
\,.
\end{equation}
where $s_0,v_0\in\R$ are Lagrange multipliers (see \cite{capovilla}).
In the above we have used $\Delta = g^{ij}\nabla_i\nabla_j$ to denote the Laplace-Beltrami operator and
$A^o = A - \frac12gH$, where $g$ is the metric induced by $\hat{f}$, to denote
the tracefree part of the second fundamental form $A$.
For more details on our notation we refer the reader to Section 2.

We emphasise that $s_0$ and $v_0$ in \eqref{EQeulerlagrange2} are abstract
mathematical constants; they have no physical meaning.
Their role is to ensure that the restrictions \eqref{P1constraint} are satisfied by $\hat{f}$.
They do not represent any physical force in the original formulation.

It is possible to derive an expression similar to \eqref{EQeulerlagrange2} for
the shape of a biomembrane where constants with a possible physical meaning appear
in a manner identical to $s_0$ and $v_0$.
This can be achieved via the inclusion of the osmotic pressure difference $p$
and tensile stress $\lambda$ in the expression for the free energy of a closed
bilayer.
This slightly different approach has by now become quite common -- see for example
\citet{B13,T06} and \citet{Vnotes}.
Indeed, \citet{DH76} argued that the Lagrange multipliers $s_0$ and $v_0$ above
essentially play these roles.
This leads to the alternative functional
\begin{align*}
\SH^{c_0}(f)
 &=  \frac{k_c}{2} \int_\Sigma (H-c_0)^2d\mu 
   + \lambda\text{Area }\Sigma
   + p\text{Vol }\Sigma
   + 2\overline{k}\pi\chi(\Sigma)
\\
 &=  \frac{k_c}{2} \int_\Sigma H^2d\mu 
   - k_cc_0 \int_\Sigma H\,d\mu
   + \Big(\frac{k_cc_0^2}{2}+\lambda\Big) \text{Area }\Sigma
   + p\text{Vol }\Sigma
   + 2\overline{k}\pi\chi(\Sigma)\,.
\end{align*}
It is the functional $\SH^{c_0}$ above that we study in this paper.
For clarity, we restate the minimisation problem associated to this functional below.
\begin{prob}[P2]
\label{CVPaugmentedfunctional}
Suppose $\Sigma$ is a closed differentiable 2-manifold with genus zero.
Let $c_0$, $p$ and $\lambda$ be fixed constants.
Minimise ${\SH}^{c_0}(f)$ in the class of smooth embeddings $f:\Sigma\rightarrow\R^3$.
That is, find an embedding $f_0:\Sigma\rightarrow\R^3$ such that
\begin{equation}
\label{EQmin2}
{\SH^{c_0}}(f_0) \le 
{\SH^{c_0}}(f)
\end{equation}
for any other smooth embedding $f$ of $\Sigma$.
\end{prob}
A solution to (P2) will satisfy the Euler-Lagrange equation \eqref{EQeulerlagrange2} with tensile
stress $\lambda$ substituted for $s_0$ and the osmotic pressure difference $p$ substituted for
$v_0$:
\begin{equation}
\label{EQeulerlagrange}
\BH^{c_0}(f) := k_c(\Delta H + H|A^o|^2) + 2k_cc_0K - \Big(\frac{k_cc_0^2}{2}+\lambda\Big)H - p = 0
\,.
\end{equation}

\begin{rmk}
In problems (P1) and (P2), the spontaneous curvature $c_0$ is stated to be a fixed constant.
From a physical perspective, this is not the case, since it is known that the spontaneous curvature
$c_0$ has the units of one of the principal cuvatures of the membrane $f$.
In particular, given a membrane $f:\Sigma\rightarrow\R^3$ with spontaneous curvature $c_0$, a
dilated membrane $\rho f$ has spontaneous curvature $\frac{c_0}{\rho}$.
We may further assume that the spontenous curvature is invariant under rigid motions in $\R^3$, i.e.
translations and rotations.
Since it is agreed in the literature that due to the homogeneity of the membrane the spontaneous
curvature does not vary based on position, we could assume that
\[
c_0 = \int_\Sigma F[f,\vec{v}]\,d\mu
\]
for an operator $F$ and a vector of parametric functions $\vec{v}(p) =
(\vec{\hat{v}}\circ f)(p)$ where $\vec{\hat{v}} = (v_1(x),\ldots,v_M(x))$,
$x\in\R^3$, $m\in\N$.
The role of the vector $\vec{v}$ would be to incorporate ambient information
into the determination of the spontaneous curvature.
A similar procedure was enacted in \cite{HelfrichWheeler} for a model of
strings in space influenced by ambient forces.
The known behaviour of $c_0$ under dilation and rigid motions would translate
to the operator $F$ being invariant under rigid motions and homogeneous of
degree $-3$.

Unfortunately this appears to be the most that is known.
It is not clear from the literature how exactly the spontaneous curvature
depends upon the embedding $f$.
This is important, since the existence of one or more solutions to problems
(P1) and (P2) and the qualitative properties such solutions possess would
depend critically on the structure of $F$.
Discovering new properties and further information on the nature of $F$ is an
important open problem in the field.
\end{rmk}

\begin{rmk}
It is physically reasonable to require that the energy of a biomembrane not
depend on scale -- this amounts to the requirement that $\SH^{c_0}(f) =
\SH^{c_0}(\rho\,f)$.
Taking into account the scaling of the measure $d\mu$, the mean curvature $H$,
and the volume, the units of $c_0$, $\lambda$ and $p$ should be $H^2\,d\mu$,
$(d\mu)^-1$ and $(d \text{Vol})^{-1}$ respectively.
In terms of $\rho$, this is $\rho^{-1}$, $\rho^{-2}$ and $\rho^{-3}$.
The parameters $k_c$ and $\overline{k}$ should be scale invariant, or, the
behaviour of $\lambda$ and $p$ under scaling should incorporate information on
how $k_c$ and $\overline{k}$ scale.
A precise formulation of the Helfrich model taking into account such scale
invariance does not yet appear to be available, although it is implicit in
the Lagrange multipliers of (P1).
\end{rmk}

\begin{rmk}
From an analysis perspective, questions on existence and regularity of
solutions to \eqref{EQeulerlagrange} must be investigated.
For the Willmore functional, where $c_0 = \lambda = p = 0$, this is a venerable
topic.
\cite{B84} classified all closed solutions through a duality method.
A landmark contribution in existence was made by \cite{S93}.
Remarkable progress on regularity issues was made by \cite{R}, who decoupled
\eqref{EQeulerlagrange} into two second order systems and studied weak
solutions.
Both \cite{BRsing} and \cite{KS04} made important contributions to the
understanding of point singularities.
The Willmore conjecture, proposed by \cite{Willmore65}, was recently resolved
by \cite{NMconjecture}.
Work on the Willmore functional continues to be a very active area, with recent
progress made on quantisation \citep{BRquant}, the gradient flow
\citep{KS01,KS02}, and boundary value problems \citep{AK14,D12,DDW,DG09}.
There are many other works besides those mentioned here -- the literature on
analysis of the Willmore functional is vast.
Work on intermediate functionals, both from a numerical and theoretical
standpoint, has also been active -- the workshop \citep{GarckeOber} and articles
\citep{GarckeLocal,GarckePara} are an excellent resource on this.
For the full Helfrich functional, many of these issues remain open\footnote{A
partial answer to the existence and regularity question can be found in
\citet{C13}.} and form important questions that future research should address.
\end{rmk}

Analysis of solutions to \eqref{EQeulerlagrange2} is quite involved.
Here our reduced focus allows us to pin down the influence of the spontaneous
curvature $c_0$ on solutions to \eqref{EQeulerlagrange} for any value of $c_0$.
The classification theorem is as follows.
It is proved in Section 3.
Note that we use below and throughout the paper $S_r:\S^2\rightarrow\R^3$ to
denote the standard embedding of a sphere with (typically) unspecified centre.

\begin{thm}
\label{TMspheresintro}
Suppose $f:\Sigma\rightarrow\R^3$ is a closed, smooth, embedded orientable surface in the
same topological class as a sphere.
If $f(\Sigma) = S_r(\Sigma)$, and $f$ is critical for the Helfrich functional $\SH^{c_0}$, then one of the following must hold:
\begin{enumerate}
	\item[(a)] For $c_0 = 0$:
		\begin{enumerate}
	\item[(i)]   $c_0 = \lambda = p = 0$, in which case $f$ may be a sphere of any radius;
	\item[(ii)]  $p\lambda < 0$, in which case $f$ must be the unique critical sphere with radius $r = -\frac{2\lambda}{p}$;
		\end{enumerate}
	\item[(b)] If $c_0 \ne 0$, set $u := \frac{\lambda}{k_cc_0} + \frac{c_0}{2}$ and $v := \frac{2p}{k_cc_0}$. Then:
		\begin{enumerate}
	\item[(iii)] $v = -u^2$, $u>0$, in which case $f$ must be the unique critical sphere with radius $r = \frac{u}{2}$;
	\item[(iv)]  $u \ge 0$ and $v\ge0$, in which case $f$ must be the unique critical sphere with radius $r = \frac12(u + \sqrt{u^2 + v})$;
	\item[(v)]   $u \ge 0$ and $v\in(-u^2,0)$, in which $f$ may be either of the two critical spheres with radii $r_\pm = \frac12(u \pm \sqrt{u^2 + v})$;
	\item[(vi)]  $u < 0$ and $v>0$, in which case $f$ must be the unique critical sphere with radius $r = \frac12(u + \sqrt{u^2 + v})$.
		\end{enumerate}
\end{enumerate}
\end{thm}

\begin{rmk}[Experimental determination of parameters]
While it is possible to obtain experimental values for $c_0$ for vesicles (as
in the work of \cite{EF72} mentioned earlier, see also \cite{values1}), there
currently exist no direct measurements of the spontaneous curvature of red
blood cells. One may reasonably guess that $c_0\ne0$ as the distribution of the
phospholipid types between the two leaflets of the bilayer is asymmetric (see
\cite{bloodcells}).

Although the spontaneous curvature should realistically depend on position $c_0
= c_0(x)$ where $x\in\Sigma$ (one possible choice of function is to use the
mean curvature of a `resting shape', see \cite{constantc01,nonconstantc01} for example), we
take the view here that $c_0$ is constant.
\cite{constantc01} suggests that an appropriate choice for the spontaneous
curvature of normal red blood cells is $c_0 = -0.62\, \mu m^{-1}$, very similar
to that reported by \citet{DH76,DH76p2}, where a range of values that confirm
observed experimental data is given, from $-0.56\, \mu m^{-1}$ to $-1.94\, \mu
m^{-1}$. Later the value of $-0.74\,\mu m^{-1}$ was settled upon.

It may be interesting to note that if one considers a sphere as the resting
shape for red blood cells then these two approaches align, but this only works
for positive spontaneous curvature.
We have taken $c_0$ to be constant, allowing positive and negative values, for
simplicity.  Allowing $c_0$ to depend on position is an important topic for
future work.

The bending modulus $k_c$ can be measured experimentally, although reported
values for healthy human red blood cells vary in the range 0.2 -- 9.0 $\times
10^{-19}$ J (see \cite{bending1} and \cite{bending2} for example). It is
reported in \cite{guckenberger17values} that most simulations are being
performed for $k_c$ between 2 and 4 $\times 10^{-19}$ J.

In \cite{constraints} a similar model to ours here is studied. First, we must
repeat that both $\lambda$ and $p$ should be considered as functions, rather
than constants, and here we have treated them as such only for simplicity.
The osmotic pressure difference $p$ for a human red blood cell has been argued
by \citet{DH76,DH76p2} to be small when approaching spherocytes, but of indeterminate sign.
When $p$ is relatively large, discrete rotational symmetries in addition to
reflection tend to appear in equilibria, whereas for negative $p$ all discrete
symmetries, including reflection symmetry, appear to be lost (see \cite{DH76}).
For the tensile stress $\lambda$ \cite{DH76} show that $\lambda$ should be of
the order $-p/k_c$, and since $k_c$ is positive, this means that $\lambda$
should have sign opposite to that of $p$.
In general $p$ and $\lambda$ are expected to depend on the equilibrium
configuration; it may not be reasonable to prescribe them a-priori.

Regarding the parameters $p$ and $\lambda$, \cite{constraints} state {\it
``Solution of the shape equation is complicated by the fact that the values of
the parameters $\lambda$ and $p$ are not known at the outset.''} They continue
to note that in practice, these parameters are tweaked after simulation until
the desired constraints on area and volume are satisfied.
Their dependence on area and volume is not clear, although, in some cases it is
possible to identify when they penalise or reward area and volume growth or
decay, see the discussion in \cite{guven} on surface tension for example.

There exists a heuristic `rule' that Lagrange multipliers are equal to the
derivative of the functional under constraint with respect to the quantity
being constrained.
This would imply that $\lambda$ is the derivative of $\int_\Sigma
(H-c_0)^2\,d\mu$, evaluated at extrema, with respect to $\A$.
Similarly, $p$ would be the derivative of $\int_\Sigma (H-c_0)^2\,d\mu$,
evaluated at extrema, with respect to $\V$.
However such a derivative is not well-defined, as almost nothing is known about
the manifold of equilibria of the functional $\int_\Sigma (H-c_0)^2\,d\mu$;
furthermore it seems unlikely that such a derivative could ever be
well-defined.

This reveals one core philosophical difference between (P1) and (P2): For (P1),
the Lagrange multipliers may never enjoy a physical interpretation, whereas for
(P2) the parameters $\lambda$ and $p$ remain open to interpretation, leaving
hope that they may be physically relevant and measurable.

There is no consensus among the literature even on the values of the heavily
studied bending modulus, or spontaneous curvature.
One may view our results here, that focus on spherical critical points for the
Helfrich functional $\SH^{c_0}$, as further informing the discussion on
realistic values for $\lambda$ and $p$.

Bending forces in Helfrich's spontaneous curvature model have been recently
surveyed by \cite{guckenberger17values}, where one may find many further
details and simulations in the above directions.
\end{rmk}

The question of when a critical point for the functional $\SH^{c_0}$ is a sphere is much more delicate.
We present the following result, which is a straightforward consequence of the uniqueness of embedded
CMC surfaces (the Hopf theorem) and the isoperimetric inequality.

\begin{thm}
\label{mainthmx}
Spheres are the unique global minimisers of the energy $\SH^{c_0}$ among closed
surfaces in the same topological class as a sphere with volume fixed at 
$\V = \frac{32\pi}{3c_0^3}$,
if $c_0, \lambda, p$ satisfy:
\begin{equation}
\label{mainthmxcondn}
3\lambda  + \frac{p}{c_0} = 0\,.
\end{equation}
\end{thm}

This result requires the very restrictive assumptions that the volume be fixed
at a level that includes the sphere with radius $\frac1{c_0}$, and the
parameters of the functional satisfy \eqref{mainthmxcondn}.
Relaxing these conditions can be achieved by introducing a closeness assumption.

Such a closeness assumption has been used by \cite{MW13} in the case of zero spontaneous
curvature.
In \citet{MW13}, the functional
\[
\tilde{\SH}^{c_0}_{\lambda_1,\lambda_2}(f)
 = \frac14\int_\Sigma (H-c_0)^2d\mu + \lambda_1 \text{Area }\Sigma + \lambda_2 \text{Vol }\Sigma
\]
was studied.
This is far from $\hat{\SH}^{c_0}$ but is on the other hand quite close to the functional $\SH^{c_0}$
featuring in problem (P2), with many properties common to both $\tilde{\SH}^{c_0}$ and $\SH^{c_0}$.
The difference between the two is given by:
\[
\SH^{c_0}(f) - 2k_c\tilde{\SH}^{c_0}_{\frac{\lambda}{2k_c},\frac{p}{2k_c}}(f)
 = 
     2\overline{k}\pi\chi(\Sigma)
\,,
\]
which is constant.
For the case where the bilayer is topologically spherical we have $\chi(\Sigma) = 2$ and
\begin{equation}
\label{EQdifference}
\SH^{c_0}(f) - \big(2k_c\tilde{\SH}^{c_0}_{\frac{\lambda}{2k_c},\frac{p}{2k_c}}(f)
                    + 4\overline{k}\pi\big)
 = 0
\,.
\end{equation}
That is, the functionals $\SH^{c_0}$ and $\tilde{\SH}^{c_0}$, up to taking special choices of the parameters
$\lambda_1$ and $\lambda_2$, differ by a constant.
The variational properties of these functionals are therefore equivalent; only the numerical energy
of shapes is altered.
The following theorem is known:

\begin{thm}[Theorem 1 in \cite{MW13}]
Let $f:\Sigma\rightarrow\R^3$ be a closed, smooth, embedded orientable surface
in the same topological class as a sphere.
Suppose additionally that
\[
	\int_\Sigma |A^o|^2\,d\mu < \varepsilon_1\quad\text{and}\quad \lambda \ge 0\,,
\]
where $\varepsilon_1$ is an explicit universal constant.
Then, if $f$ is critical for the Helfrich functional $\SH^0$, it is a standard round sphere.
\end{thm}

In this paper, we are concerned primarily with non-zero spontaneous curvature.
In addition to Theorem \ref{TMspheresintro} and Theorem \ref{mainthmx} above we
give the following, which extends the main theorem of \cite{MW13} for closed
surfaces to the case of non-vanishing spontaneous curvature.

\begin{thm}
\label{mainthmxx}
Let $f:\Sigma\rightarrow\R^3$ be a closed, smooth, embedded orientable surface.
Assume that $f$ is a Helfrich surface, that is,
\begin{equation}
\label{EQcrit}
\BH^{c_0}(f) = 0
\,.
\end{equation}
There exist universal constants $c_1, c_2, c_3, c_4$ such that if
\[
\int_\Sigma |A^o|^2\,d\mu \le \min\bigg\{
                                    \frac1{2c_1c_3},
                                    \frac1{2c_0^2c_2(\A)}
                                  \bigg\}
                          \,,
\]
\[
c_0^2 + 2\frac{\lambda}{k_c} = 2c_0u \ge 0\,,
\]
and
\[
c_0^2(\A) < \frac1{c_4}\,,
\]
then $f(\Sigma) = S_r(\Sigma)$ is a standard round sphere. 
\end{thm}

\begin{rmk}
The quantity $Q = c_0^2(\A)$ identified in the proof 
is dimensionless, as $c_0$ has the units of the mean curvature, and $\A$ has
the units of $d\mu$. Therefore $c_0^2(\A)$ has the units of $H^2d\mu$, and is
scale-invariant.
\end{rmk}

\begin{rmk}
The constants $c_1, \ldots, c_4$ are universal, but the area of $\Sigma$ is not.
One corollary that interprets Theorem \ref{mainthmxx} is:

\begin{cor}
Let $A_0\in(0,\infty)$.
Consider the class
\[
\mathcal{F} = \Big\{f:\Sigma\rightarrow\R^3\,:\,\text{$f$ is a smooth closed embedded orientable surface with $\A < A_0$}
              \Big\}\,.
\]
Suppose 
$
c_0^2 + 2\frac{\lambda}{k_c} = 2c_0u \ge 0\,,
$
and
$
c_0^2A_0 < \frac1{c_4}\,.
$
There exists a constant $\delta > 0$ depending only on $A_0$ such that
any Helfrich surface $f\in\mathcal{F}$ with $\vn{A^o}_2 < \delta$ is a standard round sphere.
\label{CorRigid}
\end{cor}
\end{rmk}

\begin{rmk}
Theorem \ref{mainthmxx} holds for a range of the parameters $c_0$, $\lambda$ and $p$.
This range is \emph{larger} than the range that parameters that spheres are
critical for, see Theorem \ref{TMspheresintro} for a full classification.
For parameters outside this range, we therefore have a \emph{reverse energy gap} phenomenon:

\begin{cor}
Assume the hypotheses of Corollary \ref{CorRigid} and
in addition that the parameters $c_0$, $\lambda$, and $p$ satisfy one of
\begin{enumerate}
	\item[(i)]   $c_0 = 0$ and $p\lambda\ge0$ with one of $p$, $\lambda$ non-zero;
	\item[(ii)]  $c_0 \ge 0$, $u\ge0$ and $v<-u^2$;
	\item[(iii)] $c_0 \le 0$, $u\le0$ and $v\le0$.
\end{enumerate}
There exists an absolute constant $C>0$ depending only on $A_0$ such that
any Helfrich surface satisfies
\[
\int_\Sigma |A^o|^2\,d\mu > C
\,.
\]
\end{cor}
\end{rmk}
The above results indicate \emph{rigidity} of the sphere.
The question of \emph{stability} of the spherocytes is also important, since,
in a patient with spherocytosis, the spherocytes do not regularly become
singular; they are instead stable and nominally functional, despite being
regularly destroyed by the spleen.
This behaviour is not typical for the Helfrich model; indeed, we expect that
generically perturbed spherocytes revert after a perturbation (if they are ever
formed at all), to a standard discocyte shape.
In general, after a perturbation acts upon a biomembrane, there is no guarantee
that the bilayer will return to a global minimum.
It may focus instead on a stable local minimum, or become singular.

In order to illustrate the general setting, here is a result for the model case
where $c_0 = \lambda = p = 0$ and we are dealing with the Willmore functional
and closed Willmore surfaces.
We say that a smooth isometrically embedded surface $f:\Sigma\rightarrow\R^3$
is weakly mean convex if $H(x) \ge 0$ for all $x\in\Sigma$, and mean convex if
$H(x) > 0$ for all $x\in\Sigma$.
If $K \ge 0$ then weak mean convexity becomes weak convexity and mean convexity
becomes convexity.\footnote{Note that in higher dimensions this is typically
expressed by saying that the second fundamental form is positive semi-definite
or positive definite. The given definition is simpler and agrees with this for
surfaces.}

\begin{prop}[Mean convex closed Willmore surfaces]
\label{PRstab1}
Consider a smoothly embedded closed weakly mean convex orientable surface
$f:\Sigma\rightarrow\R^3$.
Suppose that $c_0 = \lambda = p = 0$.
Then if $f$ is critical for the Helfrich functional $\SH^{c_0}$, it must be a sphere.
\end{prop}

Proposition \ref{PRstab1} yields a stability statement in the following sense.
Let $S_r:\Sigma\rightarrow\R^3$ be a standard sphere centred at the origin with radius $r$.
Then it is critical for the Helfrich functional with $c_0 = \lambda = p = 0$.
Consider, for some smooth function $\psi:\Sigma\rightarrow\R$, the perturbed
surface $\eta : \Sigma \rightarrow \R^3$, $\eta(x) = S_r(x) + \nu(x)\psi(x)$,
where $\nu$ is a smooth choice of outward-pointing normal vector.
Note that we assume a-priori that the perturbed map $\eta$ is an isometric
embedding, which in turn restricts the function $\psi$.

Let us impose further that the perturbed surface $\eta$ is again critical for
the Helfrich functional with the given parameters.
We ask ourselves the question:

~\\
\hspace*{5mm}\mbox{{\bf Question.} Under which conditions will the perturbed surface $\eta$ be a sphere?}
\\

All such perturbations are termed \emph{mild}.
Note that all mild perturbations of standard round spheres have $\psi(x) =
\psi_0$ where $\psi_0\in(-r,\infty)$.
Proposition \ref{PRstab1} informs us that, in this case, any perturbation which
leaves $\eta$ at least mean convex is \emph{mild}.

Section 5 contains some initial stability analysis for spherocytes in the Helfrich model.
We summarise these results as follows.

\begin{thm}
\label{TMspheresstability}
Let $S_r:\Sigma\rightarrow\R^3$ be the standard embedding of a sphere with radius $r$.
Consider a perturbed surface $\eta:\Sigma\rightarrow\R^3$, $\eta(x) = S_r(x) +
\nu(x)\psi(x)$, with $\psi:\Sigma\rightarrow\R$ a smooth function.
Assume that $\eta$ is critical for the Helfrich functional $\SH^{c_0}$.

The perturbation $\psi$ is mild (that is, $\eta$ is a sphere) in the following cases:
\begin{enumerate}[(i)]
\item  ($c_0 = 0$) Any perturbation $\psi$ such that $\eta$ is weakly mean
convex and $\lambda$, $p$ are such that the average of the mean curvature over
the perturbed surface $\eta$ is equal to $-p/\lambda$;
\item  ($c_0 \ge 0$, $\lambda \le -k_c\frac{c_0^2}{2}$, $p\le0$) Any perturbation
$\psi$ such that $\eta$ is weakly convex;
\item  ($c_0 \ge 0$, $\lambda \le -k_c\frac{c_0^2}{2}$, $p\le -k_cc_0a_0^2$) Any
perturbation $\psi$ such that $\eta$ is weakly mean convex and on the perturbed
surface the inequality $|A^o|^2(x) \le a_0^2$, for some $a_0\in(0,\infty)$, holds.
\item  ($c_0 \ge 0$, 
$p \le -k_c\big(c_0a_0^2
               + \frac{1}{2c_0}\big(\frac{c_0^2}{2} + \frac{\lambda}{k_c}\big)^2
          \big)$)
Any perturbation $\psi$ such that on the perturbed surface the inequality
$|A^o|^2(x) \le a_0^2$, for some $a_0\in(0,\infty)$, holds.
\end{enumerate}
\end{thm}

\begin{rmk}[Mild perturbations and \cite{zhong1987instability}]
The above notion of stability, via classification of mild perturbations, is not
a-priori closely related to the classical notion of positivity of the second
variation of energy.
However, earlier studies can be cast in some cases into this framework, as we briefly now explain.

In \cite{zhong1987instability} it is shown that given a sufficiently large
pressure $p$, spherocytes may be deformed into one of a family of surfaces
associated with $l$-th order spherical harmonics.

In particular, these critical shapes are not standard round spheres.
However, some of them do appear to be convex (and also mean convex), see Figure 6 in \cite{DH76}.
Further, they have for sufficiently large pressure lower Helfrich energy than a sphere.
This gives us an interesting non-existence result for mild perturbations of spheres.

In the language we use here, this reads:

\begin{thm*}[\cite{zhong1987instability}]
Let $S_r:\Sigma\rightarrow\R^3$ be the standard embedding of a sphere with radius $r$.
Assume that the pressure $p$ is sufficiently large.
There exist smooth perturbations $\psi:\Sigma\rightarrow\R$ with the following properties:

\begin{itemize}
\item The perturbed surface $\eta:\Sigma\rightarrow\R^3$, $\eta(x) = S_r(x) +
\nu(x)\psi(x)$ is critical for the Helfrich functional $\SH^{c_0}$;
\item The perturbed surface $\eta$ is not congruent to any round sphere, and so
the perturbation $\psi$ is {\bf not mild};
\item The perturbed surface is convex.
\end{itemize}
\end{thm*}

In particular, this shows that Theorem \ref{TMspheresstability} could not hold with large pressure.
It is interesting to note that only alternative (i) allows for $p$ to be positive,
so long as $\lambda$ is negative and the average curvature condition is
satisfied.
\end{rmk}

We further note that 
\cite{pleiner} studied perturbations of the shapes identified by
\cite{zhong1987instability} for $l=2$, performing a nonlinear stability
analysis.
Bifurcations were found at vaious levels of pressure.
Bifurcations were also studied by \cite{peterson1985instability}.
\cite{safran} examined the role played by $\overline{k}$ in the stability of spheres.

Theorem \ref{TMspheresstability} highlights the crucial role that convexity
plays in the analysis of stability.
We observe that weak convexity (that each of the principal curvatures are
non-negative at every point) is a much stronger condition than weak mean
convexity (that the sum of the principal curvatures is non-negative at every
point).
Since the value of $a_0$ may be quite large depending upon the osmotic pressure
difference $p$, it may be that in practice perturbations of spherocytes fall
into or near to categories (iii) and (iv) of Theorem \ref{TMspheresstability}.

Due to this focus on convexity and mean convexity, we suspect that many of the
solutions to (P2) are mean convex, irrespective of any stability concerns, that
is, without using any knowledge about the solution being a-priori `close' to a
sphere in some sense.
This is partially confirmed in the last result of our paper:

\begin{thm}
\label{TMsolnsmeanconvex}
Suppose that $c_0 > 0$, $|A^o|^2 \le a_0^2$, $\lambda \ge k_c(a_0^2 - \frac{c_0^2}{2})$, and $p < -c_0k_ca_0^2$.
Consider a smoothly embedded closed weakly mean convex orientable surface
$f:\Sigma\rightarrow\R^3$.
Then if $f$ is critical for the Helfrich functional $\SH^{c_0}$, it must be strictly mean convex.
\end{thm}

The paper is organised as follows.
Section 2 contains some brief mathematical background required for the calculations and proofs in the later sections.
Section 3 contains the proof of Theorem \ref{TMspheresintro} as well as some discussion on minimisers.
Section 4 is concerned with rigidity, and contains the proof of Theorem \ref{mainthmx} and Theorem \ref{mainthmxx}.
Section 5 is concerned with stability analysis and the proof of Theorem \ref{TMspheresstability} and Theorem \ref{TMsolnsmeanconvex}.

\section{Mathematical background}
\label{SECTmathback}
Let us briefly set notation and describe the mathematical setting in which we work.
We are interested in properties of a human red blood cell realised as an infintely thin two-dimensional shell.
Mathematically we represent that as the image of a map $f:\Sigma\rightarrow\R^3$ with the following properties:
\begin{itemize}
\item[$\bullet$] $\Sigma$ is a smooth closed orientable differentiable manifold of dimension two; 
\item[$\bullet$] $f$ is a smooth map with injective first derivative;
\item[$\bullet$] $f$ is a homeomorphism onto its image.
\end{itemize}
An example of a smooth closed differentiable manifold of dimension two is the sphere $\S^2$.
Since we are primarily interested in the analysis of possibly spherical red blood cells, this is the primary example to
keep in mind.
For the second dot point, this is enough to imply that $f$ is an \emph{immersion}, which implies that the tensor
$g_{ij}$ with components ($\partial$ here denotes the standard partial derivative)
\[
g_{ij} = \IP{\partial_if}{\partial_jf}
\]
is a Riemannian metric.
It is the induced or pullback metric, and sometimes written as $g = f^*\delta$ where $\delta$ is the Euclidean metric
(the identity matrix).
This means that the pair $(\Sigma,g)$ is a Riemannian manifold, and $f$ is then called an \emph{isometric immersion}.
If bullet point three holds, then $f(\Sigma)$ does not have any
self-intersections, the map $f$ is injective or one-to-one, and $f$ is
upgraded to an \emph{isometric embedding}.
The red blood cell as we see it under the microscope is not $(\Sigma,g)$, but the image $f(\Sigma)$.

All geometric data can be derived from the map $f$.
At each point $p$ there is a tangent space $T_pM$ and a normal space $N_pM$.
Since the codimension of $f$ is one, the normal space is always a line.
We choose a canonical global normal vector field $\nu$ pointing out from the interior of $f(\Sigma)$.

The curvature is encoded in the second fundamental form, with components $A_{ij}$ given by
\[
A_{ij} = \IP{\partial_if}{\partial_j\nu}\,.
\]
The Weingarten equation tells us that
\[
\partial_i\nu = A_{il}g^{lm}\partial_mf = A_i^m\partial_mf\,.
\]
Coordinate independent curvature quantities that arise in the paper include the mean curvature:
\[
H = g^{ij}A_{ij} = \kappa_1 + \kappa_2\,,
\]
(in the above repeated indices are summed over, $g^{ij} = (g^{-1})_{ij}$, and $\kappa_1$, $\kappa_2$ are the eigenvalues
of $A^i_j = g^{ik}A_{kj}$), the Gauss curvature
\[
K = \text{det }A^i_j = \kappa_1\kappa_2\,,
\]
the square of the second fundamental form:
\[
|A|^2 = \kappa_1^2 + \kappa_2^2\,,
\]
and the square of the tracefree second fundamental form:
\[
|A^o|^2 = \frac12(\kappa_1-\kappa_2)^2\,.
\]
The mean curvature, the unit normal, and the Laplace-Beltrami operator are further related by
\[
\Delta f = -H\nu = \vec{H}\,.
\]
The rightmost expression is called the \emph{mean curvature vector}.

In Section \ref{SECTsphere} we use some facts about the standard sphere of radius $r>0$ embedded in $\R^3$.
Let us denote by $S_r$ an embedding $S_r:\S^2\rightarrow\R^3$ that yields a standard round sphere of radius $r>0$
centred at any point $c\in\R^3$.
The image is given by
\[
S_r(\S^2) = \{x\in\R^3\,:\,|x-c|=r\}\,,
\]
where $|y| = |(y_1,y_2,y_3)| = \sqrt{y_1^2 + y_2^2 + y_3^2}$ denotes the standard length of vectors in $\R^3$.
An exterior unit normal vectorfield to $S_r$ is given by ($x = f(p)$)
\[
\nu(x) = \frac{x-c}{|x-c|} = \frac{x-c}{r} = \frac{f(p)-c}{r}\,.
\]
From the Weingarten equation we find
\[
H = g^{ij}A_{ij}
 = g^i_jA_i^j
 = g^{ij}A_i^mg_{mj}
 = g^{ij}\IP{\partial_i\nu}{\partial_jf}
 = \text{div }\nu\,,
\]
and so for the sphere of radius $r$,
\[
H = g^{ij}\IP{\partial_if}{\partial_jf}\frac{1}{r}
  = g^{ij}g_{ij}\frac{1}{r}
  = \frac{1}{r}\text{Trace }I_2
  = \frac{2}{r}\,.
\]
We note that this calculation is invariant under translation: the centre point
$c$ of the sphere does not play any role in the curvature.
Since the sphere with centre translated back to the origin is
$SO(3)$-invariant, each of the principal curvatures $\kappa_1$ and $\kappa_2$
are equal and so the above implies $\kappa_1 = \kappa_2 = \frac1r$.
In particular, we have
\[
K = \frac1{r^2}\quad\text{and}\quad |A^o|^2 = 0\,.
\]
The last condition is necesssary and sufficient: if $f$ is a sphere, then
$|A^o|^2 = 0$, and if $|A^o|^2 = 0$ and $f$ is closed, then $f$ is a sphere
(see Proposition 8.2.9 in \cite{pressley}).
This leads one naturally to consider the range of non-zero values that $|A^o|^2$ may
take to signify a kind of `distance' from being spherical.

In Section \ref{SECTsphere} we also use some elementary facts about the calculus of variations and the existence of
minimisers.
In particular, there the following fact is used.

\begin{lem}
Suppose that the functional $\hat{\SH}^{c_0}$ (or $\SH^{c_0}$) is unbounded from below for a given choice of parameters
$c_0$, $S_0$, $V_0$ (or $c_0$, $\lambda$, $p$) in the class of smooth embeddings.
Then there does not exist a solution to (P1) (or (P2)).
\label{LMfact}
\end{lem}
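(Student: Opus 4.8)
The plan is to argue by contradiction, exploiting the fact that a solution to (P1) or (P2) is by definition a \emph{global} minimiser, so that its energy is simultaneously a finite real number and a lower bound for the functional on the admissible class. First I would suppose, contrary to the claim, that a solution $f_0$ exists: for (P2) this is a smooth immersion satisfying \eqref{EQmin2}, and for (P1) a smooth immersion satisfying the constraints \eqref{P1constraint} together with \eqref{EQmin1}. The first step is to observe that $\SH^{c_0}(f_0)$ (respectively $\hat{\SH}^{c_0}(f_0)$) is a finite real number. This is where the setting is slightly non-standard: since $\Sigma$ is compact and $f_0$ is a smooth immersion, the induced metric, the second fundamental form, and hence the mean curvature $H$ are all bounded on $\Sigma$, while the area and the enclosed volume are finite; each term in the functional is therefore finite, and we may set $E_0 := \SH^{c_0}(f_0) \in \R$.

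The second step is purely logical. By the definition of a solution, the defining inequality $\SH^{c_0}(f_0) \le \SH^{c_0}(f)$ holds for every admissible smooth immersion $f$ --- that is, for every smooth immersion in the case of (P2), and for every smooth immersion meeting the constraints \eqref{P1constraint} in the case of (P1). Consequently $E_0$ is a lower bound for the functional on the admissible class. On the other hand, unboundedness from below means precisely that the infimum of $\SH^{c_0}$ over that same class equals $-\infty$, so that for every real number $M$ there is an admissible immersion $f$ with $\SH^{c_0}(f) < M$. Choosing $M = E_0$ produces an admissible $f$ with $\SH^{c_0}(f) < E_0 = \SH^{c_0}(f_0)$, directly contradicting the minimising property of $f_0$. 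Hence no solution can exist.

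The argument is identical for (P1) and (P2); the only difference is the class over which the comparison immersion $f$ ranges (constrained versus unconstrained), and it is important to keep the notion of admissible class consistent between the unboundedness hypothesis and the definition of a solution. I expect the only genuine point requiring care to be the finiteness claim of the first step --- namely that a purported minimiser, being a smooth immersion of a compact manifold, carries a well-defined real energy --- since the whole contradiction hinges on comparing this finite value against an infimum of $-\infty$. Once that is in place, the conclusion is immediate and requires no further estimates.
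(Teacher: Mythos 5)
Your proposal is correct and is essentially the same argument as the paper's: both proceed by contradiction, noting that a smooth minimiser has finite energy while unboundedness from below supplies an admissible competitor of strictly smaller energy. The only cosmetic difference is that the paper phrases this via a sequence $\{f_i\}$ with energies diverging downward and compares against a far-enough term, whereas you select a single competitor below the level $E_0$ directly; your note about keeping the admissible class (constrained for (P1), unconstrained for (P2)) consistent between the hypothesis and the minimising property is the same implicit point the paper relies on.
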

The proof is straightforward and standard.
\begin{proof}
By hypothesis, there exists a sequence $\{f_i\}_{i\in\N}$ of smooth embedded orientable surfaces such that the energy
$E(f_i)\rightarrow-\infty$ as $i\rightarrow\infty$.
Here we have used $E$ to denote either $\hat{\SH^{c_0}}$ or $\SH^{c_0}$.
If a solution $f_0$ to either problem were to exist, then the minimisation condition \eqref{EQmin1} (or \eqref{EQmin2})
would be satisfied.
However since $f_0$ is smooth, it has finite energy, and so
\[
E(f_0) > E(f_j)
\]
for some $j$ sufficiently large.
This is a contradiction.
\hfill $\Box$
\end{proof}
\begin{rmk}
The hypothesis that the functional be unbounded from below only has to hold for one particular sequence.
\end{rmk}

In Section \ref{SECTlocal} we need the following elementary result from differential geometry.

\begin{lem}
\label{LMmclem}
Suppose $f:\Sigma\rightarrow\R^3$ is a smooth embedded orientable surface containing the origin.
Then there exists at least one point where the mean curvature of $f$ is strictly positive.
\end{lem}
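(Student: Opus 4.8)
The plan is to locate a point with $H>0$ as a point of $f(\Sigma)$ farthest from the origin, and to read off the sign of $H$ there from the second-order condition for a maximum. I would set $u := \frac12\IP{f}{f}:\Sigma\rightarrow\R$. Since $\Sigma$ is closed, hence compact, and $f(\Sigma)$ contains the origin, the function $u$ is non-negative, not identically zero, and attains a strictly positive maximum at some $p_0$, with $R := |f(p_0)| > 0$.

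First I would use the critical point condition $\nabla u = 0$ at $p_0$. As $\partial_i u = \IP{f}{\partial_i f}$, this forces $f(p_0)$ to be orthogonal to $T_{p_0}\Sigma$, so that $f(p_0) = c\,\nu(p_0)$ where $c := \IP{f}{\nu}(p_0)$ satisfies $|c| = R$. Next I would compute the Laplacian of $u$ using the identity $\Delta f = -H\nu$ recorded above: since
\[
\Delta u = \IP{\Delta f}{f} + |\nabla f|^2 = -H\IP{f}{\nu} + 2,
\]
where $|\nabla f|^2 = g^{ij}g_{ij} = 2$, evaluating at $p_0$ gives $\Delta u(p_0) = 2 - c\,H(p_0)$. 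Because $p_0$ is a maximum, the Hessian of $u$ is negative semidefinite there, so its trace satisfies $\Delta u(p_0)\le 0$, whence
\[
c\,H(p_0) \ge 2 > 0.
\]

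It remains to fix the sign of $c$. At the farthest point $p_0$ the surface lies locally inside the closed ball $B_R$ and is tangent to $\partial B_R$ there; since $\nu$ is chosen to point out of the interior of $f(\Sigma)$, the outward normal at $p_0$ points in the direction of increasing $|f|$, that is, along $+f(p_0)$. Hence $c = \IP{f(p_0)}{\nu(p_0)} = R > 0$, and together with $c\,H(p_0)\ge 2$ this yields $H(p_0)\ge 2/R > 0$, as required.

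The main obstacle is exactly this last sign determination: the trace computation alone only forces $c$ and $H(p_0)$ to share a sign, and the same weakness afflicts the global identity $\int_\Sigma H\IP{f}{\nu}\,d\mu = 2\,\text{Area}(f)$ (obtained by integrating $\Delta u = 2 - H\IP{f}{\nu}$ over the closed $\Sigma$), which only produces a point where $H\IP{f}{\nu}>0$ rather than a point where $H>0$. What breaks the symmetry is the outward-normal convention at the farthest point, so I would take care to justify that orientation statement carefully---for instance by comparison with the supporting sphere $\partial B_R$, noting that the enclosed region lies locally on the origin side of the tangent plane at $p_0$---since for a merely immersed $f$ the only delicate issue is the globally consistent choice of $\nu$.
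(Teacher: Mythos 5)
Your proposal is correct and follows essentially the same route as the paper: both maximise $|f|^2$ (yours scaled by $\tfrac12$), trace the second-order condition at the maximum via $\Delta f = -H\nu$ to get $\IP{f}{\nu}H \ge 2$ there, and then use the outward-normal convention at the farthest point to fix $\IP{f}{\nu} > 0$. Your added care with the critical-point condition and the supporting-sphere justification of the sign of $\IP{f}{\nu}$ only makes explicit what the paper asserts in one line.
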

\begin{proof}
Consider the function $|f|^2$.
Since $\Sigma$ is closed, $|f|^2$ achieves a global maximum on $\Sigma$.
At this point the Hessian of $|f|^2$ is non-positive.
We compute
\[
\text{Hess}_{ij}|f|^2 = \nabla_i\nabla_j |f|^2 = 2g_{ij} + \IP{f}{\nabla_i\nabla_jf}\,.
\]
Tracing the above with $g$, at a maximum we have
\begin{equation}
\label{LMmclemEQ1}
0 \ge \Delta|f|^2 = 4 - \IP{f}{\nu}H\,.
\end{equation}
Since $\nu$ is an outward-pointing unit normal and the origin is contained in the interior of $f(\Sigma)$, at a global
maximum of $|f|^2$ we have $\IP{f}{\nu} = c > 0$.
Therefore we conclude from \eqref{LMmclemEQ1} that
\[
H \ge \frac4{\IP{f}{\nu}}
\]
at a global maximum of $|f|^2$.
\hfill $\Box$
\end{proof}


\section{Spherical solutions}
\label{SECTsphere}
Let $S_{r}:\S^2\rightarrow\R^3$ be the embedding of the standard sphere with
radius $r$ and centre at the origin.
Recalling the basic properties of spheres explained in Section
\ref{SECTmathback}, the Euler-Lagrange equation
\eqref{EQeulerlagrange} evaluated at $S_r$ is quadratic in $1/r$:
\begin{equation}
\label{EQprequadratic}
2k_cc_0r^{-2} - \Big(\frac{k_cc_0^2}{2} + \lambda\Big)2r^{-1} - p = 0\,.
\end{equation}
If $c_0 \ne 0$ (we deal with the case $c_0 = 0$ in Case 0 below) then
\eqref{EQprequadratic} is equivalent to
\begin{equation}
\label{EQquadratic}
r^{-2} - \Big(\frac{c_0}{2} + \frac{\lambda}{k_cc_0}\Big)r^{-1} - \frac{p}{2k_cc_0} = 0\,.
\end{equation}
We set $r_{\pm}$ to be the roots, if they exist, of this quadratic; that is
\begin{equation}
\label{EQradii}
r_{\pm}
 =
  \frac12\bigg(
   \frac{\lambda}{k_cc_0} + \frac{c_0}{2} \pm \sqrt{\Big(\frac{c_0}{2}
 + \frac{\lambda}{k_cc_0}\Big)^2 + \frac{2p}{k_cc_0}}
       \ \bigg)\,.
\end{equation}
Given that $S_r$ exists only for $r>0$, there is at least one spherical
critical point of the Helfrich energy if
\begin{equation}
\label{EQradiusofsphere}
 \frac{\lambda}{k_cc_0} + \frac{c_0}{2}
 + \sqrt{\Big(\frac{c_0}{2} + \frac{\lambda}{k_cc_0}\Big)^2 + \frac{2p}{k_cc_0}} > 0\,.
\end{equation}
Let us set
\[
	u = \frac{\lambda}{k_cc_0} + \frac{c_0}{2}\,,\quad\text{and}\quad
	v = \frac{2p}{k_cc_0}\,.
\]
For any spherical solution to exist, the argument of the square root in Equation \eqref{EQradiusofsphere} must be non-negative, that is,
\begin{equation}
	\label{EQatleastone}
v \ge -u^2
\,.
\end{equation}
We separate now into three cases:
\begin{align*}
\text{Case 0}&: c_0 = 0\\
\text{Case 1}&: u = -u^2\\
\text{Case 2}&: u > -u^2\,.
\end{align*}

\subsection{Case 0: $c_0 = 0$}
In this case the problem greatly simplifies: Formula \eqref{EQprequadratic} reads
\[
2\lambda r^{-1} + p = 0\,.
\]
If $\lambda = p = 0$ then we are dealing with the degenerate case of the
Willmore functional and any sphere $S_r$ of any radius is a critical point (in
fact a minimiser, see \cite{Willmore65}).

If $\lambda = 0$ and $p\ne0$ then there does not exist a spherical solution.
Otherwise, we find
\[
r = -\frac{2\lambda}{p}\,.
\]
For the right hand side to be positive, we require $\lambda$ and $p$ to be of
opposite signs.
If this is the case the sphere $S_{-2\lambda/p}$ is the (only) spherical solution.

We summarise this in the following lemma.

\begin{lem}[Resolution of Case 0]
\label{LMc0zero}
Suppose $f:\Sigma\rightarrow\R^3$ is a closed, smooth, embedded orientable surface in the
same topological class as a sphere.
Suppose $c_0 = 0$.
If $f(\Sigma) = S_r(\Sigma)$ and $f$ is critical for the Helfrich functional $\SH^{c_0}$, then
one of the following must hold:
\begin{enumerate}
\item[(i)]
$\lambda = p = 0$, in which case any $r\in(0,\infty)$ is possible; or
\item[(ii)]
$p\lambda < 0$, in which case $r = -\frac{2\lambda}{p}$.
\end{enumerate}
\end{lem}

\begin{rmk}[Minimisers I]
If we work under the additional assumption that $f$ is in fact a minimiser,
then we may refine the parameters above and in later lemmata throughout this
section. In particular, the functional reads
\begin{align*}
\SH^{c_0}(S_r)
 &=  \frac{k_c}{2} \int_\Sigma (H-c_0)^2d\mu 
   + \lambda\text{Area }\Sigma
   + p\text{Vol }\Sigma
   + 2\overline{k}\pi\chi(\Sigma)
\\
 &=  4k_c\pi
   - 8k_cc_0\pi r
   + 4\bigg(\lambda+\frac{k_cc_0^2}{2}\bigg)\pi r^2
   + \frac{4p\pi}{3} r^3
   + 4\overline{k}\pi
   \,.
\end{align*}
If $p<0$, then $\SH^{c_0}(S_r) \rightarrow -\infty$ as $r\rightarrow\infty$.
In light of Lemma \ref{LMfact} this implies that no minimiser exists, spherical or otherwise.
Adding the hypothesis that $f$ solves (P2) and thus is a minimiser would remove
the possibility that $p<0$ in alternative (ii), leaving only $\lambda < 0$.
It removes the possibility that $p<0$ in later lemmata as well.

One must note that this argument succeeds in producing anything simply due to
the lack of scaling in our interpretation of the Helfrich model.  As noted
earlier, $c_0$, $\lambda$ and $p$ should be functions of the configuration
$f:\Sigma\rightarrow\R^3$, and as such should depend on scale.  Therefore one
should not take this remark and the others (Minimisers II -- IV) to be anything
more than purely mathematical observations.
\end{rmk}

\subsection{Case 1: $v = -u^2$}
In this case the problem again greatly simplifies: The square root in \eqref{EQradii} drops out and so $r_+ = r_-$.
Further, inequality \eqref{EQradiusofsphere} simplifies to $u > 0$.
Then by \eqref{EQradii} we have $r = \frac{u}{2}$.

The lemma for this case is as follows.
\begin{lem}[Resolution of Case 1]
Suppose $f:\Sigma\rightarrow\R^3$ is a closed, smooth, embedded orientable surface in the
same topological class as a sphere.
If $f(\Sigma) = S_r(\Sigma)$, $v = -u^2$, and $f$ is critical for the Helfrich functional $\SH^{c_0}$, then $u > 0$ and
\[
	r = \frac{u}{2} = \sqrt{\frac{-p}{2k_cc_0}}\,.
\]
\label{LMcase1}
\end{lem}

\begin{rmk}[Minimisers II]
	If $f$ is a minimiser then by the previous remark $p\ge0$.
	Then since $v < 0$ in this case we must have $c_0 < 0$.
	Also, as $u > 0$ this implies $\lambda < 0$.
	That is, the only remaining possibility that a sphere is minimising
	here requires $p \ge0$, $c_0 < 0$ and $\lambda < 0$.
\end{rmk}

\subsection{Case 2: $v > -u^2$}
In this case inequality \eqref{EQatleastone} must be satisfied for us to have at least one spherical solution.
That is,
\[
	u + \sqrt{u^2 + v} > 0\,.
\]
We further separate into the two subcases where $u \ge 0$ and $u < 0$.

\subsubsection{Case 2.1: $u \ge 0$}

In this case we automatically have $r_+ > 0$ and so one critical sphere exists. For a second to exist we require by \eqref{EQradii}
\begin{align*}
	u - \sqrt{u^2 + v} &> 0\text{, or}
	\\
	u^2 &> u^2 + v\text{, since $u > 0$}
\end{align*}
which implies that $v < 0$, or $v \in (-u^2,0)$.

\subsubsection{Case 2.2: $u < 0$}

In this case we have at most one critical sphere. Since $u = -|u|$, for this single sphere to exist we require
\begin{align*}
	u + \sqrt{u^2 + v} &> 0\text{, or}
	\\
	-|u| &> -\sqrt{u^2 + v}\text{, which implies}
	\\
	v &> 0\,.
\end{align*}

We summarise the results of Case 2 in the following lemma.
\begin{lem}
\label{LMcase2}
Suppose $f:\Sigma\rightarrow\R^3$ is a closed, smooth, embedded orientable surface in the
same topological class as a sphere.
If $f(\Sigma) = S_r(\Sigma)$, $v > -u^2$, and $f$ is critical for the Helfrich functional $\SH^{c_0}$, then one of the following must hold:
\begin{enumerate}
	\item[(i)]   $u \ge 0$ and $v\ge0$, in which case $f$ must be the unique critical sphere with radius $r = \frac12(u + \sqrt{u^2 + v})$; or
	\item[(ii)]  $u \ge 0$ and $v\in(-u^2,0)$, in which $f$ may be either of the two critical spheres with radii $r_\pm = \frac12(u \pm \sqrt{u^2 + v})$; or
	\item[(iii)] $u < 0$ and $v>0$, in which case $f$ must be the unique critical sphere with radius $r = \frac12(u + \sqrt{u^2 + v})$.
\end{enumerate}
\end{lem}

\begin{rmk}[Minimisers III]
	Supposing that $f$ is a minimiser rules out $p<0$ as before.
	In (ii) it is dramatic: $v < 0$ and so we must have $c_0 < 0$. As
	$u\ge0$, this implies $\lambda < 0$.
	For (i) it is not as useful since $v \ge 0$ implies $c_0 \ge 0$, but
	this doesn't combine with $u\ge0$ to yield a sign condition on
	$\lambda$; it only implies $\lambda \ge -\frac{k_cc_0^2}{2}$.
	For (iii) we do obtain a sign restriction, since $v > 0$ implies $c_0 >
	0$, and so $u<0$ implies $\lambda < 0$.
\end{rmk}

~\\
{\it Proof of Theorem \ref{TMspheresintro}.}
Combine Lemmata \ref{LMc0zero} -- \ref{LMcase2}.
\qed~\\

Although the theorem details many possible circumstances under which spherical critical surfaces exist, there is a
three-dimensional family of parameters and so one should interpret this as roughly stating that spherical biomembranes
may only occur in very special situations.

We expect that several of the spherocytes identified by Theorem
\ref{TMspheresintro} have high energy and are unstable.
For a more rigorous interpretation of this intuition see the remarks after Lemmata \ref{LMc0zero} -- \ref{LMcase2}.

Therefore we do not expect to observe these spherocytes often in live experiments.
Some work in identifying stability properties is presented in Sections 4 and 5.
In these sections, we see that deformations which preserve a notion of convexity,
or do not alter the magnitude of the difference of the principal curvatures in
too great a manner, are mild.
The work in Section 4 shows that there exist deformations of
spherocytes that do not preserve convexity that are also mild.
(This is because a non-convex embedded sphere may nevertheless satisfy
$\vn{A^o}_2^2 < \varepsilon$.)

One exception to this generic instability appears to be the case when $c_0 > 0$
and $r = \frac2{c_0}$, so that $H = c_0$ and the curvature integral in
$\SH^{c_0}$ on
spheres with radius $r$ vanishes.
This case was earlier identified in \cite{Peterson89} as having energy independent of $c_0$.
It is in fact quite special and spheres are fundamental with those parameters: spheres are global minimisers of the
energy and thus have the best stability possible.

\begin{thm}
Suppose $c_0 > 0$ and $\lambda = p = 0$.
Then the unique global minimal solution to (P2) in the class of smooth embedded orientable surfaces is a sphere of radius
$\frac{2}{c_0}$.
\end{thm}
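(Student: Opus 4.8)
The plan is to observe that, under the given hypotheses, the full functional collapses to a nonnegative curvature integral whose minimisation can be carried out \emph{pointwise}, after which the rigidity of constant-mean-curvature surfaces identifies the minimiser.

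First I would exploit the topological hypothesis. Since $\Sigma$ has genus zero, $\chi(\Sigma) = 2$, so the Gauss--Bonnet term $2\overline{k}\pi\chi(\Sigma) = 4\overline{k}\pi$ is a constant, independent of the immersion. Setting $\lambda = p = 0$ in the definition of $\SH^{c_0}$ then leaves only the bending contribution,
\[
\SH^{c_0}(f) = \frac{k_c}{2}\int_\Sigma (H-c_0)^2\,d\mu + 4\overline{k}\pi\,.
\]

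Second, because $k_c > 0$ and the integrand $(H-c_0)^2$ is pointwise nonnegative, this immediately yields the sharp lower bound $\SH^{c_0}(f) \ge 4\overline{k}\pi$, valid for every smooth immersion and hence every smooth embedding, with equality if and only if $H \equiv c_0$ on all of $\Sigma$. The sphere $S_{2/c_0}$ attains this bound: a round sphere of radius $r$ satisfies $H = 2/r$, so radius $2/c_0$ forces $H \equiv c_0$ and thus energy exactly $4\overline{k}\pi$. Hence $S_{2/c_0}$ is a global minimiser, and existence of a solution is settled without any compactness or direct-method argument.

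For uniqueness I would note that any embedded global minimiser $f$ must saturate the lower bound, so that $H \equiv c_0$; that is, $f(\Sigma)$ is a closed embedded surface of constant mean curvature $c_0 > 0$. The essential -- and only nontrivial -- step is then to invoke Alexandrov's theorem, which guarantees that a closed embedded surface of constant mean curvature in $\R^3$ is a round sphere. The relation $c_0 = H = 2/r$ pins the radius to $r = 2/c_0$, with uniqueness understood modulo the rigid motions and reparametrisations already noted in the remark following (P1). The main obstacle is therefore not an estimate but this appeal to the Alexandrov classification, and embeddedness is genuinely used here; for the genus-zero case at hand one could instead invoke Hopf's theorem on immersed spheres of constant mean curvature, which would serve equally well.
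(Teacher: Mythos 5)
Your proposal is correct and takes essentially the same route as the paper: both reduce the functional via Gauss--Bonnet to $\frac{k_c}{2}\int_\Sigma (H-c_0)^2\,d\mu + 4\overline{k}\pi$, note that the sphere $S_{2/c_0}$ attains the obvious lower bound $4\overline{k}\pi$, and deduce uniqueness from $H\equiv c_0$ together with Alexandrov's theorem on closed embedded constant-mean-curvature surfaces. The only cosmetic difference is that the paper also verifies that $S_{2/c_0}$ satisfies the Euler--Lagrange equation, a step your direct lower-bound argument makes unnecessary.
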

\begin{proof}
For these choices of $c_0$, $\lambda$ and $p$ the functional reads
\begin{equation}
\label{EQfunctsph}
\SH^{c_0}(f)
 =  \frac{k_c}{2} \int_\Sigma (H-c_0)^2d\mu 
   + 2\overline{k}\pi\chi(\Sigma)
 =  \frac{k_c}{2} \int_\Sigma (H-c_0)^2d\mu 
   + 4\overline{k}\pi\,.
\end{equation}
The Euler-Lagrange equation is
\begin{equation}
\label{EQeulerlagrangesph}
k_c(\Delta H + H|A^o|^2) + 2k_cc_0K - \frac{k_cc_0^2}{2}H = 0\,.
\end{equation}
The sphere with $r = \frac2{c_0}$ has $H = c_0$ and $K = \frac{c_0^2}{4}$.
Further $|A^o| = 0$ and $\Delta H = 0$ on any sphere, so $f = S_r : \Sigma \rightarrow \R^3$ solves
\eqref{EQeulerlagrangesph}.

This proves that $S_r$ is a critical point for the functional in \eqref{EQfunctsph}.
Minimality is easy to see, since the functional is the sum of a non-negative integral and a constant.
The sphere $S_r$ has $H=c_0$ and so the integral in \eqref{EQfunctsph} takes on its lowest possible value: zero.

Let us also prove that up to translation and rotation it is unique.
If $f:\Sigma\rightarrow\R^3$ is any smooth embedded orientable surface, then
\[
 \frac{k_c}{2} \int_\Sigma (H-c_0)^2d\mu 
 = 0
\quad\Longrightarrow\quad H = c_0\,.
\]
A classical theorem of \cite{AlexCMCintheLarge} tells us that any embedded orientable surface with constant mean
curvature must be a sphere, with radius $r = \frac{2}{H}$, which is exactly what we wanted.
\hfill $\Box$
\end{proof}

In the next section below we extend this reasoning using additionally the
isoperimetric inequality to allow some cases where $\lambda$ and $p$ do not
vanish.

\section{Rigidity}

\subsection{Proof of Theorem \ref{mainthmx}}

Let $f:\Sigma\rightarrow\R^3$ be a closed smooth surface in the same topological class as a sphere.
We combine two famous results on the uniqueness of the standard round sphere.
The first is the Hopf theorem.
This states that:
\begin{equation}
\label{hopf}
\text{If $f$ is an embedded closed CMC surface, then it must be a standard round sphere.}
\end{equation}
The second is the isoperimetric theorem.
It states that:
\begin{equation*}
   (\text{Area }\Sigma)^3 \ge 36\pi(\text{Vol }\Sigma)^{2}
\end{equation*}
or
\begin{equation}
\label{isop}
   \text{Area }\Sigma \ge (36\pi)^\frac13(\text{Vol }\Sigma)^{\frac23}
\end{equation}
with equality if and only if $f$ is a standard round sphere.

Fix $\V = \frac{32\pi}{3c_0^3}$.
For $f \in \mathcal{F}$, using the above and condition \eqref{mainthmxcondn}, we estimate the energy by
\begin{align*}
(\SH^{c_0}(f) - 4\overline{k}\pi) &= \frac{k_c}{2}\int_\Sigma (H-c_0)^2\,d\mu + \lambda \A + p\V
\\
 &\ge \Big[\lambda (36\pi)^\frac13 + p\Big(\frac{4\pi}{3c_0^3}\Big)^\frac13\Big]\V^\frac23
\\
 &\ge \Big[3\lambda  + \frac{p}{c_0}\Big]\Big(\frac{4\pi}{3}\Big)^\frac13\V^\frac23
\\
 &\ge 0
\end{align*}
with equality only when $f$ is a standard round sphere with the given
prescribed volume, that is, any sphere with radius $\frac2{c_0}$.

\subsection{Proof of Theorem \ref{mainthmxx}}

{\bf Note.} In this section the constant $c\in\R$ may change from line to
line, referring to an absolute constant with each usage.

We begin by writing Corollary 5 in \cite{MW13} in our notation for closed
surfaces and with the choice $\gamma \equiv 1$:

\begin{lem}
Let $f:\Sigma\rightarrow\R^3$ be a closed, smooth, embedded orientable surface.
Then
\begin{align*}
&\int_\Sigma \big(|\nabla_{(2)}A|^2
                + |A|^2|\nabla A|^2
                + |A|^4|A^o|^2\big)\,d\mu
   + \Big(c_0^2 + 2\frac{\lambda}{k_c}\Big)\int_\Sigma |\nabla H|^2\,d\mu
\\
&\le 
    c\int_\Sigma \BH^{c_0}(f)\cdot\Delta H\,d\mu
  + c_1\int_\Sigma\big(|A^o|^6 + |A^o|^2|\nabla A^o|^2\big)\,d\mu
\\&\qquad
  + c\,c_0\int_\Sigma (\Delta H)|A^o|^2\,d\mu
  - c\,\frac{c_0}{2}\int_\Sigma (\Delta H)H^2\,d\mu
\,,
\end{align*}
where $c$ and $c_1$ are absolute constants.
\label{GapLem}
\end{lem}

Now we work to estimate the additional terms involving the spontaneous curvature on the right hand side.

\begin{lem}
Let $f:\Sigma\rightarrow\R^3$ be a closed, smooth, embedded orientable surface.
Then
\begin{align*}
c\,c_0 \int_\Sigma (\Delta H)|A^o|^2\,d\mu
&\le \frac12\int_\Sigma |\nabla_{(2)}A|^2\,d\mu
\\&\quad
 + c_2\,c_0^2\cdot(\A)\vn{A^o}_2^2\int_\Sigma |\nabla_{(2)}A|^2 + |A|^4|A^o|^2\,d\mu\,.
\end{align*}
where $c$ and $c_2$ are absolute constants.
\label{GapLem2}
\end{lem}
\begin{proof}
Since $H = g^{ij}A_{ij}$ and $\Delta = g^{ij}\nabla_i\nabla_j$, we can use the
Cauchy-Schwarz inequality to see that
\begin{equation}
\label{EQstart}
|\Delta H| \le |\nabla_{(2)}A|
\,,
\end{equation}
where on the left we have the absolute value of the scalar function $\Delta H$
and on the right we have the induced norm of the $(0,4)$-tensor field with components
$\nabla_i\nabla_j A_{kl}$.

Using the inequality $2ab \le a^2 + b^2$ we find
\begin{equation}
\label{EQss}
  c\,c_0\int_\Sigma (\Delta H)|A^o|^2\,d\mu
\le
 \frac12\int_\Sigma |\nabla_{(2)}A|^2\,d\mu
 + \frac12c\,c_0^2\int_\Sigma |A^o|^4\,d\mu\,.
\end{equation}
Applying the Michael-Simon Sobolev inequality (see \cite{MSS}) we find
\[
\int_\Sigma |A^o|^4\,d\mu
 \le c\bigg(
       \int_\Sigma |\nabla A^o|\,|A^o| + |H|\,|A^o|^2\,d\mu
      \bigg)^2\,.
\]
To each term on the right hand side we use the H\"older inequality, yielding
\begin{equation}
\label{EQmid}
\int_\Sigma |A^o|^4\,d\mu
 \le c\vn{A^o}_2^2\int_\Sigma |\nabla A^o|^2\,d\mu
   + c\vn{A^o}_2^2\int_\Sigma |H|^2\,|A^o|^2\,d\mu
\,.
\end{equation}
For the first term, we estimate
\begin{equation}
\label{EQmidd}
 c\vn{A^o}_2^2\int_\Sigma |\nabla A^o|^2\,d\mu
\le \frac{1}{4\A} \vn{A^o}_2^4 + c\cdot(\A)\vn{\nabla A^o}_2^4\,.
\end{equation}
The divergence theorem and closedness implies
\[
\vn{\nabla A^o}_2^2 = -\IP{A^o}{\Delta A^o}_{L^2} \le \vn{A^o}_2 \vn{\Delta A^o}\,.
\]
Since $A^o = A - \frac12 gH$, estimate \eqref{EQstart} implies $|\Delta A^o|
\le 2|\nabla_{(2)}A|$.
H\"older's inequality implies that $\vn{A^o}_2^4 \le (\A)\vn{A^o}_4^4$.
Combining these with \eqref{EQmidd}, we find
\begin{equation}
\label{EQmiddd}
 c\vn{A^o}_2^2\int_\Sigma |\nabla A^o|^2\,d\mu
\le \frac{1}{4} \vn{A^o}_4^4 + c\cdot(\A)\vn{A^o}_2^2\vn{\nabla_{(2)}A}_2^2\,.
\end{equation}
Now for the second term on the right hand side of \eqref{EQmid}, we estimate similarly
\begin{align}
c\vn{A^o}_2^2\int_\Sigma |H|^2\,|A^o|^2\,d\mu
 &\le \frac{1}{4} \vn{A^o}_4^4 + c\cdot(\A)\bigg(\int_\Sigma |H|^2\,|A^o|^2\,d\mu\bigg)^2
\notag
\\
\label{EQmide}
 &\le \frac{1}{4} \vn{A^o}_4^4 + c\cdot(\A)\vn{A^o}_2^2\int_\Sigma |A|^4\,|A^o|^2\,d\mu
\,.
\end{align}
Now we combine \eqref{EQmiddd}, \eqref{EQmide} with \eqref{EQmid} and absorb to conclude 
\begin{equation}
\label{EQlate}
\int_\Sigma |A^o|^4\,d\mu
 \le c\cdot(\A)\vn{A^o}_2^2\vn{\nabla_{(2)}A}_2^2
   + c\cdot(\A)\vn{A^o}_2^2\int_\Sigma |A|^4\,|A^o|^2\,d\mu
\,.
\end{equation}
Together with the estimate \eqref{EQss} this finishes the proof.
\end{proof}

\begin{prop}
Let $f:\Sigma\rightarrow\R^3$ be a closed, smooth, embedded orientable surface.
There exist universal constants $c_1, c_2, c_3, c_4$ such that if
\begin{equation}
\label{EQsmall}
\int_\Sigma |A^o|^2\,d\mu \le \min\bigg\{
                                    \frac1{2c_1c_3},
                                    \frac1{2c_0^2c_2(\A)}
                                  \bigg\}
                          := \varepsilon_2
\end{equation}
then
\begin{align*}
&\int_\Sigma \big(|\nabla_{(2)}A|^2
                + |A|^2|\nabla A|^2
                + |A|^4|A^o|^2\big)\,d\mu
   + \Big(c_0^2 + 2\frac{\lambda}{k_c}\Big)\int_\Sigma |\nabla H|^2\,d\mu
\\
&\le 
    c\int_\Sigma \BH^{c_0}(f)\cdot\Delta H\,d\mu
  + c_4\,c_0^2(\A)\int_\Sigma |\nabla_{(2)}A|^2 + |A|^2\,|\nabla A|^2\,d\mu
\,,
\end{align*}
where $c$ is an absolute constant.
\label{GapLemProp}
\end{prop}
\begin{proof}
We note the following consequence of the Michael-Simon Sobolev inequality,
first proven in Lemma 2.5 of \cite{KS02}:
\begin{align}
\int_\Sigma &\big(|\nabla A^o|^2|A^o|^2 + |A^o|^6)\gamma^4 d\mu
\notag\\
&\le
  c_3\vn{A^o}^2_{2}\int_\Sigma |\nabla_{(2)}A|^2 + |\nabla A|^2|A|^2 + |A|^4|A^o|^2 \, d\mu
\,.
\label{EQmssforAo}
\end{align}
By assuming that $\varepsilon_2 \le \frac1{2c_1c_3}$ we may absorb this term from the right
hand side of Lemma \ref{GapLem} into the left and multiply through to obtain
\begin{align*}
&\int_\Sigma \big(|\nabla_{(2)}A|^2
                + |A|^2|\nabla A|^2
                + |A|^4|A^o|^2\big)\,d\mu
   + \Big(c_0^2 + 2\frac{\lambda}{k_c}\Big)\int_\Sigma |\nabla H|^2\,d\mu
\\
&\le 
    c\int_\Sigma \BH^{c_0}(f)\cdot\Delta H\,d\mu
  + c\,c_0\int_\Sigma (\Delta H)|A^o|^2\,d\mu
  - c\,\frac{c_0}{2}\int_\Sigma (\Delta H)H^2\,d\mu
\,.
\end{align*}
Second, by assuming that $\varepsilon_2 \le
\min\{(2c_1c_3)^{-1},(2c_0^2c_2(\A))^{-1}\}$ and absorbing with the help of
Lemma \ref{GapLem2}, we find
\begin{align*}
&\int_\Sigma \big(|\nabla_{(2)}A|^2
                + |A|^2|\nabla A|^2
                + |A|^4|A^o|^2\big)\,d\mu
   + \Big(c_0^2 + 2\frac{\lambda}{k_c}\Big)\int_\Sigma |\nabla H|^2\,d\mu
\\
&\le 
    c\int_\Sigma \BH^{c_0}(f)\cdot\Delta H\,d\mu
  - c\,\frac{c_0}{2}\int_\Sigma (\Delta H)H^2\,d\mu
\,.
\end{align*}
For the last remaining term we integrate by parts and estimate to obtain
\begin{equation}
\label{EQlatee}
  - c\,\frac{c_0}{2}\int_\Sigma (\Delta H)H^2\,d\mu
= 
    c\,c_0\int_\Sigma H|\nabla H|^2\,d\mu
\le \frac12\int_\Sigma |\nabla A|^2|A|^2\,d\mu
 + c\,c_0^2 \int_\Sigma |\nabla H|^2\,d\mu
\,.
\end{equation}
Now the identity (6) from \cite{MW13} implies
\[
 c\,c_0^2 \int_\Sigma |\nabla H|^2\,d\mu
\le 4c\,c_0^2\int_\Sigma |\nabla^* A^0|^2\,d\mu
\le  4c\,c_0^2\int_\Sigma |\nabla A^0|^2\,d\mu
\,.
\]
Using the Michael-Simon Sobolev inequality we find
\begin{align}
4c\,c_0^2\int_\Sigma |\nabla A^0|^2\,d\mu
&\le c\,c_0^2\bigg(\int_\Sigma |\nabla_{(2)}A| + |H|\,|\nabla A|\,d\mu\bigg)^2
\notag
\\
&\le c\,c_0^2(\A)\int_\Sigma |\nabla_{(2)}A|^2 + |A|^2\,|\nabla A|^2\,d\mu
\,.
\label{EQlateee}
\end{align}

Absorbing with \eqref{EQlatee}, and bounding the extra term as in
\eqref{EQlateee}, our main estimate reads
\begin{align}
&\int_\Sigma \big(|\nabla_{(2)}A|^2
                + |A|^2|\nabla A|^2
                + |A|^4|A^o|^2\big)\,d\mu
   + \Big(c_0^2 + 2\frac{\lambda}{k_c}\Big)\int_\Sigma |\nabla H|^2\,d\mu
\notag
\\
&\le 
    c\int_\Sigma \BH^{c_0}(f)\cdot\Delta H\,d\mu
  + c_4\,c_0^2(\A)\int_\Sigma |\nabla_{(2)}A|^2 + |A|^2\,|\nabla A|^2\,d\mu
\,,
\label{EQrefmainest}
\end{align}
where $c_4$ is an absolute constant.
The estimate \eqref{EQrefmainest} is the statement of the Proposition.
\end{proof}

In Proposition \ref{GapLemProp} we note the quantity $c_0^2(\A)$; this is
dimensionless, as $c_0$ has the units of the mean curvature, and $\A$ has the
units of $d\mu$. Therefore $c_0^2(\A)$ has the units of $H^2d\mu$, and is
scale-invariant.

\begin{cor}
Let $f:\Sigma\rightarrow\R^3$ be a closed, smooth, embedded orientable surface.
Assume that $f$ is a Helfrich surface, that is,
\begin{equation*}
\BH^{c_0}(f) = 0
\,.
\end{equation*}
There exist universal constants $c_1, c_2, c_3, c_4$ such that if
\eqref{EQsmall} holds, $c_0^2 + 2\frac{\lambda}{k_c} \ge 0$, and
\begin{equation}
\label{EQc0small}
c_0^2(\A) < \frac1{c_4}
\end{equation}
then $f(\Sigma) = S_r(\Sigma)$. 
\end{cor}
\begin{proof}
The smallness condition \eqref{EQcrit} allows us to absorb the second term in
\eqref{EQrefmainest} into the left hand side. This yields
\begin{align*}
&\int_\Sigma \big(|\nabla_{(2)}A|^2
                + |A|^2|\nabla A|^2
                + |A|^4|A^o|^2\big)\,d\mu
   + \Big(c_0^2 + 2\frac{\lambda}{k_c}\Big)\int_\Sigma |\nabla H|^2\,d\mu
\\
&\le  
    c\int_\Sigma \BH^{c_0}(f)\cdot\Delta H\,d\mu
\,.
\end{align*}
The condition $c_0^2 + 2\frac{\lambda}{k_c} \ge 0$, and \eqref{EQcrit} upgrade this to
\begin{equation*}
\int_\Sigma \big(|\nabla_{(2)}A|^2
                + |A|^2|\nabla A|^2
                + |A|^4|A^o|^2\big)\,d\mu
\le  
    0
\,.
\end{equation*}
With smoothness of $f$, this implies $|A^o| = 0$, and we are done.
\end{proof}

\section{Local stability}
\label{SECTlocal}

We begin with a proof of Proposition \ref{PRstab1} from the introduction.

~\\
{\it Proof of Proposition \ref{PRstab1}.} 
In this case we have
\[
\Delta H + H|A^o|^2 = 0\,.
\]
Integrating and using the divergence theorem we have
\[
\int_\Sigma \Delta H + H|A^o|^2 d\mu = \int_\Sigma H|A^o|^2 d\mu = 0\,.
\]
Since the surface is smooth, the mean curvature $H$ and the norm squared of the
trace-free second fundamental form $|A^o|^2$ are smooth.  Due to the mean
convexity hypothesis the only possibility is that $|A^o|^2(x) = 0$ wherever
$H(x) > 0$.
By Lemma \ref{LMmclem} there is at least one $x_0$ where $H(x_0) > 0$.
Therefore $|A^o|^2(x_0) = 0$.
Now recall the definition of $A^o$:
\[
A^o_{ij} = A_{ij} - \frac12g_{ij}H\,.
\]
The Codazzi equation tells us that the tensor $\nabla_iA_{jk}$ is totally symmetric.
Therefore, using the summation convention and the definition $H = g^{ij}A_{ij}$
\begin{equation}
\label{Codazzi}
\nabla^k A^o_{kj} = \nabla^kA_{kj} - \frac12\nabla_jH = \frac12\nabla_jH\,.
\end{equation}
By smoothness, $H(x) > 0$ in a neighbourhood of $x_0$.
Denote the maximal such neighbourhood by $\Omega$.
Suppose that $\partial\Omega \ne \emptyset$.
Note that $H(x) = 0$ for all $x\in\partial\Omega$.
Now the above argument shows that $|A^o|^2(x) = 0$ on $\Omega$, and so by
\eqref{Codazzi} above we have 
\[
\nabla_jH = \nabla^k A^o_{kj} = 0
\quad\text{ on }\quad\Omega
\]
for any $j$. Therefore $H(x) = H(x_0)$ is constant on $\Omega$.
However, $H(x) = 0$ on $\partial\Omega$, and therefore $H$ must be discontinuous on $\partial\Omega$.
This is a contradiction.
Therefore $\partial\Omega=\emptyset$, $\Omega = \Sigma$, and $H$ is constant on all of $\Sigma$.
Since $f:\Sigma\rightarrow\R^3$ is an embedding, the theorem of \cite{AlexCMCintheLarge} applies and we conclude that $f$ is a sphere.
\qed
~\\

Proposition \ref{PRstab1} yields a stability statement in the following sense.
Let $S_r:\Sigma\rightarrow\R^3$ be a standard sphere with radius $r$.
Then it solves the problem (P2) with $c_0 = \lambda = p = 0$.
Consider, for some smooth function $\psi:\Sigma\rightarrow\R$, the perturbed
surface $\eta : \Sigma \rightarrow \R^3$, $\eta(x) = S_r(x) + \nu(x)\psi(x)$,
where $\nu$ is a smooth choice of normal vector.
Let us impose that the perturbed surface $\eta$ also solves the problem (P2)
with the given parameters.
We ask ourselves the question: under which conditions on the perturbation
$\psi$ will $\eta$ be a sphere?
All such perturbations are termed \emph{mild}.
Proposition \ref{PRstab1} informs us that, in this case, any perturbation which
leaves $\eta$ at least mean convex is \emph{mild}.

As witnessed in Section \ref{SECTsphere}, variations on the parameters
$\lambda$, $p$ and $c_0$ induce wild changes in the behaviour of the solutions
to (P2).
In general, we do not expect spherical solutions to be stable.
For certain ranges of these parameters, we are nevertheless able to obtain a
result analogous to that of Proposition \ref{PRstab1}.

\begin{prop}
\label{PRstab2}
Consider a smoothly embedded closed weakly mean convex surface $f:\Sigma\rightarrow\R^3$.
Suppose that $c_0 = 0$ and that $p, \lambda$ are such that the average of the mean curvature of $f$ is equal to $-p/\lambda$.
Then if $f$ is critical for the Helfrich functional $\SH^{c_0}$, it must be a sphere.
\end{prop}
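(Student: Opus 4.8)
The plan is to reduce this to the situation already handled in Proposition \ref{PRstab1} by exploiting the hypothesis on the average mean curvature. First I would write down the Euler-Lagrange equation \eqref{EQeulerlagrange} in the case $c_0 = 0$, which reads
\[
k_c(\Delta H + H|A^o|^2) - \lambda H - p = 0.
\]
The idea is that the two extra terms $-\lambda H - p$, absent in the pure Willmore case of Proposition \ref{PRstab1}, should integrate to nothing precisely when the average mean curvature equals $-p/\lambda$.

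Next I would integrate this equation over the closed surface $\Sigma$. The term $\int_\Sigma \Delta H\,d\mu$ vanishes by the divergence theorem, leaving
\[
k_c \int_\Sigma H|A^o|^2\,d\mu - \lambda \int_\Sigma H\,d\mu - p\,\text{Area }f = 0.
\]
The crucial step is then to substitute the hypothesis: writing the average as $\frac{1}{\text{Area }f}\int_\Sigma H\,d\mu = -p/\lambda$ gives $\int_\Sigma H\,d\mu = -\frac{p}{\lambda}\,\text{Area }f$, so that $-\lambda\int_\Sigma H\,d\mu = p\,\text{Area }f$ cancels exactly against the pressure term. What remains is
\[
k_c\int_\Sigma H|A^o|^2\,d\mu = 0,
\]
which, since $k_c > 0$, is precisely the integral identity that drives the proof of Proposition \ref{PRstab1}.

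From this point the argument is essentially verbatim that of Proposition \ref{PRstab1}. Weak mean convexity ($H \ge 0$) forces the non-negative integrand $H|A^o|^2$ to vanish pointwise, so $|A^o|^2 = 0$ wherever $H > 0$. After translating so that the origin lies in the interior of $f(\Sigma)$ --- permitted since (P2) is invariant under rigid motions --- Lemma \ref{LMmclem} supplies a point $x_0$ with $H(x_0) > 0$, whence $|A^o|^2(x_0) = 0$. On the maximal connected neighbourhood $\Omega$ of $x_0$ on which $H > 0$, the Codazzi identity $\nabla^k A^o_{kj} = \frac12\nabla_j H$ together with $|A^o|^2 \equiv 0$ on $\Omega$ forces $H$ to be constant there; were $\partial\Omega$ nonempty, continuity would force $H = 0$ on $\partial\Omega$, contradicting $H = H(x_0) > 0$. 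Hence $\Omega = \Sigma$, $H$ is constant, and the theorem of \cite{AlexCMCintheLarge} identifies the embedded surface $f$ as a sphere.

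The only genuinely new ingredient compared with Proposition \ref{PRstab1} is this averaging trick, so the main point requiring care is recognising that the constraint $\bar H = -p/\lambda$ is exactly the algebraic condition annihilating the lower-order terms upon integration. Everything downstream is a direct transcription of the earlier proof, including the appeal to Alexandrov's theorem, whose embeddedness hypothesis is already built into the statement.
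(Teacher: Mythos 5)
Your proof is correct and takes essentially the same route as the paper: both integrate the Euler--Lagrange equation over $\Sigma$, use the hypothesis $\frac{1}{\text{Area }f}\int_\Sigma H\,d\mu = -p/\lambda$ to cancel the $\lambda$ and $p$ terms, and then run the argument of Proposition \ref{PRstab1} on the resulting identity $\int_\Sigma H|A^o|^2\,d\mu = 0$. Your explicit remark about translating so the origin lies inside $f(\Sigma)$ before invoking Lemma \ref{LMmclem} is a small point of care that the paper leaves implicit.
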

\begin{proof}
In this case we have
\[
\Delta H + \Big(|A^o|^2 - \frac{\lambda}{k_c}\Big)H - \frac{p}{k_c} = 0\,.
\]
Rearranging, this implies
\[
\Delta H + H|A^o|^2 = \frac{\lambda}{k_c}H + \frac{p}{k_c}
\,.
\]
Observe that the integral of the right hand side vanishes:
\[
\int_\Sigma \bigg(\frac{\lambda}{k_c}H + \frac{p}{k_c}\bigg)d\mu
 = k_c^{-1} \bigg[ \lambda \int_\Sigma H\,d\mu + p|\Sigma| \bigg]
 = k_c^{-1} \bigg[ \lambda |\Sigma| \frac{-p}{\lambda} + p|\Sigma| \bigg]
 = 0\,.
\]
Therefore the proof of Proposition \ref{PRstab1} goes through analogously in this case.
\end{proof}

It is possible to extend this integral method in various directions to obtain results specific for narrow choices of the parameters $\lambda$, $p$ and $c_0$.
What we wish to do now is to illustrate a different method that appears more suitable to the case where $c_0 \ne 0$.
It has the drawback of requiring either weak convexity (as opposed to weak {\bf mean} convexity above) or a condition on $|A^o|^2$.
The method relies on the following standard tool.
The statement below is a corollary of the more general theorem proved in \cite{CalabiHopf58} (see also \cite{Hopf27}).

\begin{thm}[Calabi-Hopf Maximum Principle]
\label{TMcalabi}
Suppose $(\Sigma, g)$ is a Riemannian manifold.
Consider $u:U\rightarrow\R$ a smooth function defined over the open set $U\subset\Sigma$.
If
\[
(\Delta u)(x) \le 0
\]
everywhere in $U$, and if $u$ attains a local minimum value at some point in $U$, then $u$ is identically constant in $U$.
\end{thm}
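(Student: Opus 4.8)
The plan is to recognise this as the strong minimum principle for the Laplace--Beltrami operator and to prove it by transferring the classical Euclidean elliptic theory to a coordinate chart, the engine being Hopf's boundary point lemma. In local coordinates one has
\[
\Delta u = \frac{1}{\sqrt{\det g}}\,\partial_i\!\big(\sqrt{\det g}\,g^{ij}\partial_j u\big),
\]
whose principal part $g^{ij}\partial_i\partial_j$ is uniformly elliptic on any compact coordinate ball, since $g$ is a smooth Riemannian metric and so the $g^{ij}$ are smooth and locally bounded above and below. Hence every local tool of the linear second order elliptic theory is at my disposal; I only need Hopf's lemma, in the form: if $w$ is smooth, $\Delta w \le 0$ on a ball $B'$, $w>0$ in $B'$ and $w(z)=0$ at a boundary point $z$, then the outward normal derivative satisfies $\partial_n w(z)<0$ (the interior sphere condition at $z$ holding automatically for a ball).

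I would first establish constancy near the minimiser. Fix a point $x_0$ where $u$ attains a local minimum, set $m:=u(x_0)$, and choose a coordinate ball $B$ about $x_0$ so small that $u\ge m$ on $B$; then $w:=u-m$ satisfies $w\ge0$, $w(x_0)=0$ and $\Delta w\le0$ on $B$. Suppose $w\not\equiv0$ on $B$. Then $\{w>0\}$ is open and non-empty, so I inflate a ball $B'\subset\{w>0\}$ with $\overline{B'}\subset B$ until $\partial B'$ first meets $\{w=0\}$ at a point $z$. Hopf's lemma applied to $w$ on $B'$ gives $\partial_n w(z)<0$, hence $\nabla u(z)\ne0$; but $u\ge m=u(z)$ on $B$ makes $z$ an interior local minimum, forcing $\nabla u(z)=0$, a contradiction. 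Therefore $w\equiv0$, i.e. $u\equiv m$ on the whole ball $B$.

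The remaining, and in my view principal, difficulty is to propagate this to all of $U$. I would run an open--closed argument on the connected component of $U$ containing $x_0$ for the set $G:=\operatorname{int}\{u=m\}$, which is open and, by the previous step, non-empty. At a boundary point $z\in\partial G$ one has $u(z)=m$ and, since $u\equiv m$ on $G$ and $\nabla u$ is continuous, also $\nabla u(z)=0$; if there are points near $z$ with $u>m$, then a touching-ball Hopf argument as above produces $\nabla u(z)\ne0$, a contradiction, so $G$ is closed and equals the component, giving $u\equiv m$. The genuine obstacle is that Hopf only forbids the coincidence set from abutting a region where $u>m$: a smooth superharmonic $u$ may be constant on a ball yet decrease \emph{below} $m$ outside it (so that $x_0$ is a local but not global minimum), in which case $z$ is a boundary \emph{maximum} of the superharmonic $w$ and Hopf gives no information. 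Thus the passage to all of $U$ requires that the minimum be global over the connected domain, i.e. $u\ge m$ throughout. This is precisely the situation in our applications, where $\Sigma$ is closed and connected and $U=\Sigma$: compactness makes the attained local minimum global, and one then has the clean shortcut that $\int_\Sigma \Delta u\,d\mu=0$ by the divergence theorem forces $\Delta u\equiv0$ and hence $u\equiv m$ directly. I would record the general statement for $U\subset\Sigma$ under the understanding that the extremum is the global minimum over the connected set on which constancy is asserted.
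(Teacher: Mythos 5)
You should know at the outset that the paper offers no proof of Theorem~\ref{TMcalabi} at all: it is stated as a corollary of the general result in \cite{CalabiHopf58} (see also \cite{Hopf27}), accompanied only by an informal remark interpreting it as a statement about concave functions. Your proposal is therefore necessarily a different route, and it is the standard one: in a coordinate chart the Laplace--Beltrami operator is uniformly elliptic with smooth coefficients on compact subsets, Hopf's boundary point lemma applies to it, and your touching-ball argument correctly yields constancy of $u$ on a ball about the minimiser. What the paper's approach buys is brevity; what yours buys is an actual proof, plus an audit of the hypotheses --- and that audit turns up something real.

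The caveat you raise is a genuine defect of the statement as printed, not pedantry: with only a \emph{local} minimum, and with no connectedness hypothesis on $U$, the theorem is false. A counterexample: on $\Sigma = U = \R^2$ let $g(t) = e^{-1/t}$ for $t>0$, $g(t)=0$ for $t\le 0$, and set $u(x,y) = 1 - \int_0^x g(t)\,dt$; then $u$ is smooth, $\Delta u = -g'(x) \le 0$ everywhere, and $u$ attains a (weak) local minimum at $(-1,0)$, being constant near that point, yet $u$ is not constant. So the hypothesis must be, exactly as you say, that $u$ attains its infimum over a connected $U$. Two small repairs to your own wording: compactness of $\Sigma$ does not promote a local minimum to a global one --- it guarantees that the global minimum is attained, which is what the argument needs; and in your open--closed propagation the Hopf contradiction is obtained at a touching point produced by the usual ``centre closer to the coincidence set than to the chart boundary'' construction, not necessarily at your chosen $z\in\partial G$ --- though under the global-minimum hypothesis the propagation is anyway immediate, since every point of $\{u=m\}$ is then itself a minimiser to which your first step applies. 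Finally, your observation that the paper's applications are unaffected is correct: in Corollary~\ref{CY1} and the corollary following it, $U=\Sigma$ is closed and connected, the minimum of $H$ is global by compactness, and there one may even bypass Hopf's lemma entirely, since $\int_\Sigma \Delta H\,d\mu = 0$ together with $\Delta H\le 0$ forces $\Delta H\equiv 0$, whence $\int_\Sigma |\nabla H|^2\,d\mu = -\int_\Sigma H\Delta H\,d\mu = 0$ and $H$ is constant.
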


The theorem allows the following pair of corollaries.

\begin{cor}
\label{CY1}
Consider a smoothly embedded closed weakly convex surface $f:\Sigma\rightarrow\R^3$.
Suppose that $c_0 \ge 0$, $p \le 0$, and $\lambda \le -k_c\frac{c_0^2}{2}$.
Then if $f$ is critical for the Helfrich functional $\SH^{c_0}$, it must be a sphere.
\end{cor}
\begin{proof}
In this case we have
\begin{align*}
\Delta H &= - \Big(|A^o|^2 - \frac{c_0^2}{2} - \frac{\lambda}{k_c}\Big)H - 2c_0K + \frac{p}{k_c}
\\
       &\le \Big(\frac{c_0^2}{2} + \frac{\lambda}{k_c}\Big)H - 2c_0K + \frac{p}{k_c}\,.
\end{align*}
The weak convexity hypothesis means that at every point the principal curvatures $\kappa_1, \kappa_2$ are non-negative.
This implies that $-2c_0K := -2c_0\kappa_1\kappa_2 \le 0$.
The mean curvature is also clearly non-negative, and the conditions on $c_0$,
$\lambda$, $p$ imply that the entire right hand side is non-positive.

Therefore we have
\[
(\Delta H)(x) \le 0
\]
at every point $x\in\Sigma$.
Note that $\Sigma$ is an open set inside $\Sigma$, and since it is additionally
compact and $H$ is a smooth function on $\Sigma$, it must achieve its minimum
at some point in $\Sigma$.
Theorem \ref{TMcalabi} applies, yielding $H$ identically constant, and again
\cite{AlexCMCintheLarge} shows that $f$ must be a standard embedding of the
sphere $S_r:\Sigma\rightarrow\R^3$.

\end{proof}

\begin{cor}
Consider a smoothly embedded closed weakly mean convex surface $f:\Sigma\rightarrow\R^3$.
Let $a_0 > 0$ be such that
\[
|A^o|^2(x) \le a_0^2
\]
for every $x\in\Sigma$.
Suppose that $c_0 \ge 0$, $p \le -k_cc_0a_0^2$, and $\lambda \le -k_c\frac{c_0^2}{2}$.
Then if $f$ is critical for the Helfrich functional $\SH^{c_0}$, it must be a sphere.
\label{CY2}
\end{cor}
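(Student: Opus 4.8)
The plan is to follow the template of the preceding Corollary~\ref{CY1}: reduce the statement to a maximum-principle assertion for the mean curvature $H$, apply the Calabi--Hopf principle (Theorem~\ref{TMcalabi}) to deduce that $H$ is constant, and then invoke Alexandrov's theorem \cite{AlexCMCintheLarge} to conclude that the embedded closed surface $f$ is a sphere; the admissible radii are then read off from Lemma~\ref{LMc0pos} and Lemma~\ref{LMc0zero}. First I would rewrite the Euler--Lagrange equation \eqref{EQeulerlagrange} as an equation for the Laplacian of $H$, namely
\[
\Delta H = -H|A^o|^2 - 2c_0K + \Big(\frac{c_0^2}{2}+\frac{\lambda}{k_c}\Big)H + \frac{p}{k_c}\,,
\]
and aim to show that $\Delta H \le 0$ everywhere on $\Sigma$.

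The essential difference from Corollary~\ref{CY1} is that here I assume only weak \emph{mean} convexity, so that $H\ge 0$ but the Gauss curvature $K=\kappa_1\kappa_2$ need not be non-negative; consequently the term $-2c_0K$ is no longer automatically non-positive and must be controlled by hand. This is precisely where the hypothesis $|A^o|^2\le a_0^2$ enters. Using the pointwise identity $K=\tfrac14H^2-\tfrac12|A^o|^2$, which follows from $|A^o|^2=\tfrac12(\kappa_1-\kappa_2)^2$, the bound $|A^o|^2\le a_0^2$ gives $K\ge \tfrac14H^2-\tfrac12a_0^2$ and hence, since $c_0\ge 0$, $-2c_0K\le -\tfrac{c_0}{2}H^2 + c_0a_0^2$. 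Discarding the manifestly non-positive term $-H|A^o|^2$ and completing the square in $H$, with $x=\frac{\lambda}{k_cc_0}+\frac{c_0}{2}$ as in Theorem~\ref{TMspheresintro} (so that $\frac{c_0^2}{2}+\frac{\lambda}{k_c}=c_0x$), I would arrive at an estimate of the form
\[
\Delta H \le -\frac{c_0}{2}(H-x)^2 + \frac{c_0x^2}{2} + c_0a_0^2 + \frac{p}{k_c}\,.
\]
Since $-\tfrac{c_0}{2}(H-x)^2\le 0$, it remains to absorb the constant positive contribution using the strong negativity of the osmotic pressure $p$ guaranteed by $p\le -k_ca_0^2$ together with $\lambda\ge -k_c\frac{c_0^2}{2}$ (equivalently $x\ge 0$).

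The hard part is exactly this last bookkeeping step: unlike the fully convex case, neither the Gauss term nor the $\lambda$-linear term carries a favourable sign on its own, so the negative budget supplied by $p$ must be spent carefully and simultaneously against the constant $c_0a_0^2$ and the square-completion remainder $\tfrac{c_0x^2}{2}$. I would expect the only genuine subtlety to lie in verifying that the constant term is non-positive across the full parameter range $(c_0,\lambda,p)$ allowed by the hypotheses, since this is where the precise size relations between $p$, $a_0$, $c_0$ and $\lambda$ become essential; this is the crux of the argument and the place where a refinement of the stated bound may be required.

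Once $\Delta H\le 0$ is established pointwise, the remainder is automatic. The manifold $\Sigma$ is compact, so $H$ attains a global minimum, and Theorem~\ref{TMcalabi} then forces $H$ to be constant; an embedded closed surface of constant mean curvature is a round sphere by \cite{AlexCMCintheLarge}, and the permissible radii are enumerated by the earlier case analysis. I would treat the degenerate boundary case $c_0=0$ separately, since there $x$ is undefined and the Gauss term drops out entirely, bringing us back to the integral method already used in Propositions~\ref{PRstab1} and \ref{PRstab2}.
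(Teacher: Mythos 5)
Your strategy is the same as the paper's: reduce to a pointwise inequality $\Delta H \le 0$, apply the Calabi--Hopf maximum principle (Theorem \ref{TMcalabi}) to force $H$ constant, finish with \cite{AlexCMCintheLarge}, and control the Gauss term through the identity $K = \frac14H^2 - \frac12|A^o|^2$ together with $|A^o|^2 \le a_0^2$. The difficulty is that your argument does not close, and the gap you flag in the final bookkeeping step is genuine rather than a detail you could have finessed. Your estimate
\[
\Delta H \le -\frac{c_0}{2}(H-x)^2 + \frac{c_0x^2}{2} + c_0a_0^2 + \frac{p}{k_c}
\]
is correct (solve \eqref{EQeulerlagrange} for $\Delta H$, discard $-H|A^o|^2 \le 0$ using $H \ge 0$, and estimate $-2c_0K \le -\frac{c_0}{2}H^2 + c_0a_0^2$), but the hypotheses supply only $x \ge 0$ and $\frac{p}{k_c} \le -a_0^2$, so the constant term is bounded only by $\frac{c_0x^2}{2} + (c_0-1)a_0^2$, which is strictly positive whenever $c_0 > 1$, or whenever $c_0 > 0$ and $\lambda$ (hence $x$) is large. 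Since the worst case $H = x$, $|A^o|^2 = a_0^2$ is pointwise admissible for a weakly mean convex surface, no redistribution of the negative budget $p \le -k_ca_0^2$ can rescue the conclusion $\Delta H \le 0$; along this route one genuinely needs a hypothesis of the form $p \le -k_cc_0\big(a_0^2 + \frac{x^2}{2}\big)$.

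The reason the paper's own proof appears to close is instructive. Inequality \eqref{EQcy2}, imported from the proof of Corollary \ref{CY1}, carries the $H$-linear term as $-\big(\frac{c_0^2}{2} + \frac{\lambda}{k_c}\big)H$, whereas solving \eqref{EQeulerlagrange} for $\Delta H$ and discarding $-H|A^o|^2$ produces that term with a plus sign, exactly as in your display. With the minus sign, $\lambda \ge -k_c\frac{c_0^2}{2}$ and $H \ge 0$ render the term non-positive and nothing needs to be absorbed; with the correct plus sign one lands precisely in your difficulty. (Moreover, even granting that sign, the paper's final step needs $\frac{p}{k_c} + c_0|A^o|^2 \le 0$, i.e.\ $p \le -c_0k_ca_0^2$ as in Theorem \ref{TMsolnsmeanconvex}, not the stated $p \le -k_ca_0^2$, unless $c_0 \le 1$.) In short: you followed the paper's method faithfully and executed it with the correct signs, and the unresolved absorption step you identified is a real gap --- one that is present, concealed by a sign discrepancy, in the paper's own argument, and that cannot be repaired without strengthening the hypotheses on $p$ (or restricting $c_0$ and $\lambda$).
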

\begin{proof}
Similarly to the proof above, we have
\begin{equation}
\label{EQcy2}
\Delta H \le \Big(\frac{c_0^2}{2} + \frac{\lambda}{k_c}\Big)H - 2c_0K + \frac{p}{k_c}\,.
\end{equation}
The general strategy has not changed -- our goal remains to show that under the given hypotheses, the right hand side of the above differential inequality is non-positive.
Recall that, in terms of the principal curvatures $\kappa_1$, $\kappa_2$, we have:
\[
|A|^2 = \kappa_1^2 + \kappa_2^2;\qquad
H^2 = (\kappa_1 + \kappa_2)^2;\qquad
|A^o|^2 = \frac12(\kappa_1-\kappa_2)^2;\qquad
K = \kappa_1\kappa_2\,,
\]
so that $K = \frac14H^2 - \frac12|A^o|^2$.
Therefore
\[
-2c_0K + \frac{p}{k_c}
 = -\frac12c_0H^2 + \Big(\frac{p}{k_c} + c_0|A^o|^2\Big)
 \le 0\,,
\]
by hypothesis.

The weak mean convexity hypothesis means that $H \ge 0$, and so we have again
that the conditions on $c_0$, $\lambda$, $p$ imply that the entire right hand
side of \eqref{EQcy2} is non-positive.
The proof now continues analogously to that of Corollary \ref{CY1}.
\end{proof}

If $p$ is sufficiently negative, then one may use this maximum principle idea
to remove all geometric assumptions apart from embeddedness.
We present a final variation of the idea above incorporating this observation.

\begin{thm}
Consider a smoothly embedded closed surface $f:\Sigma\rightarrow\R^3$.
Let $a_0 > 0$ be such that
\[
|A^o|^2(x) \le a_0^2
\]
for every $x\in\Sigma$.
Suppose that $c_0 \ge 0$, and 
\[
p \le -k_c\bigg(c_0a_0^2
               + \frac{1}{2c_0}\Big(\frac{c_0^2}{2} + \frac{\lambda}{k_c}\Big)^2
          \bigg)
\,.
\]
Then if $f$ is critical for the Helfrich functional $\SH^{c_0}$, it must be a sphere.
\end{thm}
\begin{proof}
Using the calculations in the proof above, we factorise
\begin{equation*}
\Delta H \le -\bigg(
                \frac{\sqrt{c_0}}{\sqrt{2}}H + \frac{1}{\sqrt{2c_0}}\Big(\frac{c_0^2}{2} + \frac{\lambda}{k_c}\Big)
              \bigg)^2
               + \frac{1}{2c_0}\Big(\frac{c_0^2}{2} + \frac{\lambda}{k_c}\Big)^2
               + c_0a_0^2 + \frac{p}{k_c}\,.
\end{equation*}
The hypothesis implies that the right hand side is non-positive.
The proof now continues analogously to that of Corollary \ref{CY1}.
\end{proof}

Combining these four results gives alternatives $(i)$ through to $(iv)$ of Theorem \ref{TMspheresstability}.

\begin{rmk}[Minimisers IV]
	Proposition \ref{PRstab1} is for Willmore surfaces, and alternatives (ii) -- (iv)
	require $p < 0$.  This means that although a certain class of
	perturbations are mild, there must be other perturbations that decrease
	the energy.

	Therefore from the perspective of minimisers, the
	most interesting alternative is (i), in the particular case where $p >
	0$ and $\lambda < 0$.
\end{rmk}

Clearly one may tweak the geometric and parametric conditions that allow the
two strategies outlined in the above stability results to go through.
It appears that some form of convexity is critical to the argument, and so it
is interesting to determine conditions on the parameters under which convexity
automatically holds.
We finish this section with a demonstration of how one may deduce such a
result.
\\

{\it Proof of Theorem \ref{TMspheresstability}.} 
Rearrranging the Euler-Lagrange equation \eqref{EQeulerlagrange} we find
\[
\Delta H + \Big(|A^o|^2 - \frac{c_0^2}{2} - \frac{\lambda}{k_c}\Big)H + 2c_0K - \frac{p}{k_c} = 0
\]
Suppose a global minimum for $H$ occurs at $x$.
Setting $H = H(x)$, we have
\[
\Big(|A^o|^2 - \frac{c_0^2}{2} - \frac{\lambda}{k_c}\Big)H
 \le
    \frac{p}{k_c}
    - 2c_0K
\,.
\]
Using $2K = \frac12H^2 - |A^o|^2$ as in the proof of Corollary \ref{CY2}, at $x$ we have
\[
\Big(|A^o|^2 - \frac{c_0^2}{2} - \frac{\lambda}{k_c}\Big)H
 \le
    \frac{p}{k_c}
    - \frac{c_0}{2}H^2 + c_0|A^o|^2
\,.
\]
Setting $a = \frac{c_0}{2}$, $b = |A^o|^2 - \frac{c_0^2}{2} - \frac{\lambda}{k_c}$, and $c = -\frac{p}{k_c} - c_0|A^o|^2$ the above is
\begin{equation}
\label{EQparabola}
P(H) = aH^2 + bH + c \le 0\,.
\end{equation}
In $H$, $P$ is a parabola with zero, one, or two real roots.

{\bf Zero roots.} In this case either $P$ is always positive or always negative.
Since $c_0 > 0$, the term $aH^2$ dominates for large enough $H$ and $P$ will be
positive there.
This contradicts \eqref{EQparabola}, which holds for $H = H(x)$.

{\bf One root.} In this case $P$ is always non-positive or always non-negative.
As with the case above, $c_0 > 0$ implies that for large enough $H$, $P$ is
positive.
Therefore $H$ is the unique point where $P$ touches the axis, that is $H$ solves
\[
P'(H) = 0 \qquad \Longleftrightarrow \qquad H = -\frac{b}{2a}.
\]
By hypothesis $|A^o|^2 \le a_0^2$, and $\lambda \ge k_c(a_0^2 - \frac{c_0^2}{2})$, so  
\begin{equation}
\label{EQin1}
b = |A^o|^2 - \frac{c_0^2}{2} - \frac{\lambda}{k_c}
 \le a_0^2 - \frac{c_0^2}{2} - \frac{k_c(a_0^2 - \frac{c_0^2}{2})}{k_c} = 0\,,
\end{equation}
and thus $H$ is non-negative.
Further, $H = 0$ only if $b = 0$, in which case $-4ac = 0$, since $b^2 - 4ac =
0$ when we have only one root.
As $a = \frac{c_0}{2} > 0$, this implies that $c = 0$.
However our hypothesis $p < -c_0k_ca_0^2$ implies
\begin{equation}
\label{EQin2}
c \ge -\frac{p}{k_c} - c_0a_0^2 > 0\,,
\end{equation}
a contradiction.
Therefore $H$ is strictly positive.

{\bf Two roots.} In this case $P$ changes sign. There are two roots $H_1$ and
$H_2$ given by the quadratic formula.
We may assume that $H_1 < H_2$.
Since $P(H) \le 0$ and $a > 0$, $H$ will lie in the interval $[H_1,H_2]$.
The lower bound $H_1$ satisfies
\[
H_1 = \frac{-b-\sqrt{b^2-4ac}}{2a}\,.
\]
The computation \eqref{EQin1} shows that $b$ is non-positive, and \eqref{EQin2} shows that $c$ is strictly positive.
Therefore
\begin{align}
\notag
\qquad ac &> 0
\\
\notag
\Longrightarrow\qquad b^2 &> b^2 - 4ac
\\
\notag
\Longrightarrow\qquad -b &> \sqrt{b^2 - 4ac}\,.
\end{align}
Therefore $H_1 > 0$.

We conclude that the mean curvature in each case is strictly positive, and so
the minimum of the mean curvature in $\Sigma$ is strictly positive, as
required.
\qed


\section*{Acknowledgements}
The third author would like to thank Carsten Hartmann for introducing her to the Helfrich model and biomembranes.
The second author would like to thank Annette Worthy for helpful discussions related to this work.
The second and third authors were supported by ARC grant DP120100097.

The first author was partially supported by URC Small Grant 228381024 during
two visits to the University of Wollongong, where this work was partially
completed.
The first author would also like to thank the FIM at ETH Z\"urich and SFB DFG
71 for financial support during the completion of this work.


\bibliographystyle{spmpscinat}
\bibliography{MathBiol}

\begin{thebibliography}{57}
\providecommand{\natexlab}[1]{#1}
\providecommand{\url}[1]{#1}
\providecommand{\urlprefix}{URL }
\expandafter\ifx\csname urlstyle\endcsname\relax
  \providecommand{\doi}[1]{DOI~\discretionary{}{}{}#1}\else
  \providecommand{\doi}{DOI~\discretionary{}{}{}\begingroup
  \urlstyle{rm}\Url}\fi

\bibitem[{Abdullah et~al.(2009)Abdullah, Zhang, Camp, Rossberg, Bathurst,
  Colombani, Casella, Nabaweesi, and Chang}]{medAZCRBCCNC09}
Abdullah, F., Zhang, Y., Camp, M., Rossberg, M., Bathurst, M., Colombani, P.,
  Casella, J., Nabaweesi, R., Chang, D.: Splenectomy in hereditary
  spherocytosis: Review of 1,657 patients and application of the pediatric
  quality indicators.
\newblock Pediatr. Blood Cancer \textbf{52}(7), 834--837 (2009)

\bibitem[{Abels et~al.(2014)Abels, Garcke, and M{\"u}ller}]{GarckeLocal}
Abels, H., Garcke, H., M{\"u}ller, L.: Local well-posedness for
  volume-preserving mean curvature and {W}illmore flows with line tension.
\newblock arXiv preprint arXiv:1403.1132  (2014)

\bibitem[{Aleksandrov(1962)}]{AlexCMCintheLarge}
Aleksandrov, A.: Uniqueness theorems for surfaces in the large. {I}.
\newblock Amer. Math. Soc. Transl. (2) \textbf{21}, 341--354 (1962)

\bibitem[{Alessandroni and Kuwert(2014)}]{AK14}
Alessandroni, R., Kuwert, E.: Local solutions to a free boundary problem for
  the {W}illmore functional.
\newblock arXiv preprint arXiv:1408.6670  (2014)

\bibitem[{Barrett et~al.(2008)Barrett, Garcke, and N{\"u}rnberg}]{GarckePara}
Barrett, J.W., Garcke, H., N{\"u}rnberg, R.: Parametric approximation of
  {W}illmore flow and related geometric evolution equations.
\newblock SIAM J. Sci. Comput. \textbf{31}(1), 225--253 (2008)

\bibitem[{Bassereau et~al.(2014)Bassereau, Sorre, and L{\'e}vy}]{values1}
Bassereau, P., Sorre, B., L{\'e}vy, A.: Bending lipid membranes: Experiments
  after w. helfrich's model.
\newblock Advances in colloid and interface science \textbf{208}, 47--57 (2014)

\bibitem[{Bernard and Rivi{\`e}re(2013)}]{BRsing}
Bernard, Y., Rivi{\`e}re, T.: Singularity removability at branch points for
  {W}illmore surfaces.
\newblock Pacific J. Math. \textbf{265}(2), 257--311 (2013)

\bibitem[{Bernard and Riviere(2014)}]{BRquant}
Bernard, Y., Riviere, T.: Energy quantization for {W}illmore surfaces and
  applications.
\newblock Annals Math. \textbf{180}, 87--136 (2014)

\bibitem[{Black(2013)}]{B13}
Black, E.: Mathematical models of biomembranes.
\newblock In: Proceedings of REU2013, pp. 1--19. University of Chicago (2013)

\bibitem[{Bryant(1984)}]{B84}
Bryant, R.: A duality theorem for {W}illmore surfaces.
\newblock J. Differential Geom. \textbf{20}(1), 23--53 (1984)

\bibitem[{Calabi(1958)}]{CalabiHopf58}
Calabi, E.: An extension of {E. Hopf}�s maximum principle with an application
  to {R}iemannian geometry.
\newblock Duke Math. Journal \textbf{25}(1), 45--56 (1958)

\bibitem[{Capovilla et~al.(2002)Capovilla, Guven, and Santiago}]{guven}
Capovilla, R., Guven, J., Santiago, J.: Lipid membranes with an edge.
\newblock Physical Review E \textbf{66}(2), 021,607 (2002)

\bibitem[{Capovilla et~al.(2003)Capovilla, Guven, and Santiago}]{capovilla}
Capovilla, R., Guven, J., Santiago, J.: Deformations of the geometry of lipid
  vesicles.
\newblock Journal of Physics A: Mathematical and General \textbf{36}(23), 6281
  (2003)

\bibitem[{Chasis et~al.(1988)Chasis, Agre, and Mohandas}]{chasis1988decreased}
Chasis, J., Agre, P., Mohandas, N.: Decreased membrane mechanical stability and
  in vivo loss of surface area reflect spectrin deficiencies in hereditary
  spherocytosis.
\newblock J. Clin. Invest. \textbf{82}(2), 617 (1988)

\bibitem[{Choksi and Veneroni(2013)}]{C13}
Choksi, R., Veneroni, M.: Global minimizers for the doubly-constrained
  {H}elfrich energy: the axisymmetric case.
\newblock Calc. Var. Partial Differential Equations \textbf{48}(3-4), 337--366
  (2013)

\bibitem[{Dall{`}Acqua(2012)}]{D12}
Dall{`}Acqua, A.: Uniqueness for the homogeneous {D}irichlet {W}illmore
  boundary value problem.
\newblock Ann. Glob. Anal. Geom. \textbf{42}(3), 411--420 (2012)

\bibitem[{{Dall`Acqua} et~al.(2013){Dall`Acqua}, Deckelnick, and Wheeler}]{DDW}
{Dall`Acqua}, A., Deckelnick, K., Wheeler, G.: Unstable {W}illmore surfaces of
  revolution subject to natural boundary conditions.
\newblock Calc. Var. Partial Differential Equations \textbf{48}(3-4), 293--313
  (2013)

\bibitem[{Deckelnick and Grunau(2009)}]{DG09}
Deckelnick, K., Grunau, H.C.: A {N}avier boundary value problem for {W}illmore
  surfaces of revolution.
\newblock Analysis \textbf{29}(3), 229--258 (2009)

\bibitem[{Deuling and Helfrich(1976{\natexlab{a}})}]{DH76}
Deuling, H., Helfrich, W.: The curvature elasticity of fluid membranes: a
  catalogue of vesicle shapes.
\newblock Journal de Physique \textbf{37}(11), 1335--1345 (1976{\natexlab{a}})

\bibitem[{Deuling and Helfrich(1976{\natexlab{b}})}]{DH76p2}
Deuling, H., Helfrich, W.: Red blood cell shapes as explained on the basis of
  curvature elasticity.
\newblock Biophys. J. \textbf{16}(8), 861--868 (1976{\natexlab{b}})

\bibitem[{Duwe et~al.(1990)Duwe, Kaes, and Sackmann}]{DKS90}
Duwe, H., Kaes, J., Sackmann, E.: Bending elastic moduli of lipid bilayers:
  modulation by solutes.
\newblock J. Phys.-Paris \textbf{51}(10), 945--961 (1990)

\bibitem[{Evans and Fung(1972)}]{EF72}
Evans, E., Fung, Y.C.: Improved measurements of the erythrocyte geometry.
\newblock Microvasc. Res. \textbf{4}(4), 335--347 (1972)

\bibitem[{Freund(2014)}]{bending1}
Freund, J.B.: Numerical simulation of flowing blood cells.
\newblock Annual review of fluid mechanics \textbf{46}, 67--95 (2014)

\bibitem[{Garcke et~al.(2008)Garcke, Niethammer, Peletier, and
  R{\"o}ger}]{GarckeOber}
Garcke, H., Niethammer, B., Peletier, M.A., R{\"o}ger, M.: Mini-workshop:
  Mathematics of biological membranes.
\newblock Oberwolfach Reports \textbf{5}(3), 2293--2336 (2008)

\bibitem[{Guckenberger and Gekle(2017)}]{guckenberger17values}
Guckenberger, A., Gekle, S.: Theory and algorithms to compute {H}elfrich
  bending forces: A review.
\newblock J. Physics. Condensed matter  (2017)

\bibitem[{Guido and Tomaiuolo(2009)}]{bending2}
Guido, S., Tomaiuolo, G.: Microconfined flow behavior of red blood cells in
  vitro.
\newblock Comptes {R}endus {P}hysique \textbf{10}(8), 751--763 (2009)

\bibitem[{Hassoun and Palek(1996)}]{medHP96}
Hassoun, H., Palek, J.: Hereditary spherocytosis: a review of the clinical and
  molecular aspects of the disease.
\newblock Blood Rev. \textbf{10}(3), 129--147 (1996)

\bibitem[{Helfrich(1973)}]{H73}
Helfrich, W.: Elastic properties of lipid bilayers: theory and possible
  experiments.
\newblock Zeitschrift f{\"u}r Naturforschung. Teil C: Biochemie, Biophysik,
  Biologie, Virologie \textbf{28}(11), 693 (1973)

\bibitem[{Hopf(1927)}]{Hopf27}
Hopf, E.: Elementare {B}emerkungen {\"u}ber die L{\"o}sungen partieller
  {D}ifferentialgleichngen zweiter {O}rdnung vom elliptischen {T}ypus, vol.~19.
\newblock Preussische Akademie der Wissenschaften (1927)

\bibitem[{Kr{\"u}ger(2012)}]{nonconstantc01}
Kr{\"u}ger, H.: Computer simulation study of collective phenomena in dense
  suspensions of red blood cells under shear.
\newblock Springer Science \& Business Media (2012)

\bibitem[{Kuwert and Sch{\"a}tzle(2001)}]{KS01}
Kuwert, E., Sch{\"a}tzle, R.: {The {W}illmore flow with small initial energy}.
\newblock J. Differential Geom. \textbf{57}(3), 409--441 (2001)

\bibitem[{Kuwert and Sch{\"a}tzle(2002)}]{KS02}
Kuwert, E., Sch{\"a}tzle, R.: {Gradient flow for the {W}illmore functional}.
\newblock Comm. Anal. Geom. \textbf{10}(2), 307--339 (2002)

\bibitem[{Kuwert and Sch{\"a}tzle(2004)}]{KS04}
Kuwert, E., Sch{\"a}tzle, R.: {Removability of point singularities of
  {W}illmore surfaces}.
\newblock Ann. of Math. \textbf{160}(1), 315--357 (2004)

\bibitem[{Marques and Neves(2014)}]{NMconjecture}
Marques, F., Neves, A.: Min-{M}ax theory and the {W}illmore conjecture.
\newblock Annals Math. \textbf{179}(2), 683--782 (2014)

\bibitem[{McCoy and Wheeler(2013)}]{MW13}
McCoy, J., Wheeler, G.: A classification theorem for {H}elfrich surfaces.
\newblock Math. Ann. \textbf{357}(4), 1485--1508 (2013)

\bibitem[{Michael and Simon(1973)}]{MSS}
Michael, J., Simon, L.: Sobolev and mean-value inequalities on generalized
  submanifolds of {$\R^n$}.
\newblock Communications on Pure and Applied Mathematics \textbf{26}(3),
  361--379 (1973)

\bibitem[{Mohandas and Gallagher(2008)}]{bloodcells}
Mohandas, N., Gallagher, P.G.: Red cell membrane: past, present, and future.
\newblock Blood \textbf{112}(10), 3939--3948 (2008)

\bibitem[{Mutz and Helfrich(1990)}]{MH90}
Mutz, M., Helfrich, W.: Bending rigidities of some biological model membranes
  as obtained from the fourier analysis of contour sections.
\newblock J. Phys.-Paris \textbf{51}(10), 991--1001 (1990)

\bibitem[{Perrotta et~al.(2008)Perrotta, Gallagher, and Mohandas}]{medPGM08}
Perrotta, S., Gallagher, P., Mohandas, N.: Hereditary spherocytosis.
\newblock Lancet \textbf{372}(9647), 1411--1426 (2008)

\bibitem[{Peterson(1985)}]{peterson1985instability}
Peterson, M.: An instability of the red blood cell shape.
\newblock J. Appl. Phys. \textbf{57}(5), 1739--1742 (1985)

\bibitem[{Peterson(1989)}]{Peterson89}
Peterson, M.: Deformation energy of vesicles at fixed volume and surface area
  in the spherical limit.
\newblock Phys. Rev. A \textbf{39}, 2643 (1989)

\bibitem[{Pleiner(1990)}]{pleiner}
Pleiner, H.: Shape deformations of spherical vesicles under radial pressure.
\newblock Physical Review A \textbf{42}(10), 6060 (1990)

\bibitem[{Pozrikidis(2005)}]{constantc01}
Pozrikidis, C.: Resting shape and spontaneous membrane curvature of red blood
  cells.
\newblock Mathematical Medicine and Biology \textbf{22}(1), 34--52 (2005)

\bibitem[{Pressley(2010)}]{pressley}
Pressley, A.N.: Elementary differential geometry.
\newblock Springer Science \& Business Media (2010)

\bibitem[{Rescorla et~al.(2007)Rescorla, West, Engum, and Grosfeld}]{medRWEG07}
Rescorla, F., West, K., Engum, S., Grosfeld, J.: Laparoscopic splenic
  procedures in children: experience in 231 children.
\newblock Ann. Surg. \textbf{246}(4), 683 (2007)

\bibitem[{Rivi{\`e}re(2008)}]{R}
Rivi{\`e}re, T.: Analysis aspects of {W}illmore surfaces.
\newblock Invent. Math. \textbf{174}(1), 1--45 (2008)

\bibitem[{Safran(1991)}]{safran}
Safran, S.: Saddle-splay modulus and the stability of spherical microemulsions.
\newblock Physical Review A \textbf{43}(6), 2903 (1991)

\bibitem[{Seifert(1997)}]{S97}
Seifert, U.: Configurations of fluid membranes and vesicles.
\newblock Adv. Phys. \textbf{46}(1), 13--137 (1997)

\bibitem[{Simon(1993)}]{S93}
Simon, L.: Existence of surfaces minimizing the {W}illmore functional.
\newblock Comm. Anal. Geom \textbf{1}(2), 281--326 (1993)

\bibitem[{Steck(1989)}]{steck}
Steck, T.L.: Red cell shape.
\newblock New York: Academic Press (1989)

\bibitem[{Steigmann et~al.(2003)Steigmann, Baesu, Rudd, Belak, and
  McElfresh}]{constraints}
Steigmann, D., Baesu, E., Rudd, R.E., Belak, J., McElfresh, M.: On the
  variational theory of cell-membrane equilibria.
\newblock Interfaces and {F}ree {B}oundaries \textbf{5}(4), 357--366 (2003)

\bibitem[{Svetina and {\v{Z}}ek{\v{s}}(1989)}]{svetina1989membrane}
Svetina, S., {\v{Z}}ek{\v{s}}, B.: Membrane bending energy and shape
  determination of phospholipid vesicles and red blood cells.
\newblock Eur. Biophys. J. \textbf{17}(2), 101--111 (1989)

\bibitem[{Tu(2006)}]{T06}
Tu, Z.: Elastic theory of membranes.
\newblock AAPPS Bulletin \textbf{16}(3), 31 (2006)

\bibitem[{Voinova(2006)}]{Vnotes}
Voinova, M.: Geometrical methods in the theory of lipid membranes� and
  cells� shapes.
\newblock In: Living state physics, pp. 1--191. Chalmers University of
  Technology (2006)

\bibitem[{Wheeler(2015)}]{HelfrichWheeler}
Wheeler, G.: Global analysis of the generalised helfrich flow of closed curves
  immersed in {$\mathbb{R}^n$}.
\newblock Transactions of the American Mathematical Society \textbf{367}(4),
  2263--2300 (2015)

\bibitem[{Willmore(1965)}]{Willmore65}
Willmore, T.: Note on embedded surfaces.
\newblock An. St. Univ. Iasi, Mat. 12B pp. 493--496 (1965)

\bibitem[{Zhong-Can and Helfrich(1987)}]{zhong1987instability}
Zhong-Can, O., Helfrich, W.: Instability and deformation of a spherical vesicle
  by pressure.
\newblock Phys. Rev. Lett. \textbf{59}(21), 2486--2488 (1987)

\end{thebibliography}

\end{document}